\tikzset{join/.code=\tikzset{after node path={%
\ifx\tikzchainprevious\pgfutil@empty\else(\tikzchainprevious)%
edge[every join]#1(\tikzchaincurrent)\fi}}}
\tikzset{>=stealth',every on chain/.append style={join},
         every join/.style={->}}
\tikzset{
    >=stealth',
    punkt/.style={
           rectangle,
           rounded corners,
           draw=black, very thick,
           text width=6.5em,
           minimum height=2em,
           text centered},
    pil/.style={
           ->,
           thick,
           shorten <=2pt,
           shorten >=2pt,}
}
\newcommand{\BB}{\mathbb}
\def\L{{\cal L}}
\newcommand{\bea}{\begin{eqnarray}}
\newcommand{\eea}{\end{eqnarray}}
\newcommand{\be}{\begin{eqnarray}}
\newcommand{\ee}{\end{eqnarray}}
\newcommand{\nn}{\nonumber}
\newcommand{\Tr}{\textrm{Tr}}
\newcommand{\sbullet}{\textrm{\tiny{\textbullet}}}
\newcommand{\bra}{\langle}
\newcommand{\ket}{\rangle}
\newcommand{\im}{\textrm{Im}\,}
\newcommand{\re}{\textrm{Re}\,}
\newcommand{\To}{\Rightarrow}
\newcommand{\reeb}{\textrm{\scriptsize{$R$}}}
\newcommand{\sreeb}{\textrm{\tiny{$R$}}}
\def\ga{\alpha}
\def\gc{\gamma}
\def\Gc{\Gamma}
\def\Gd{\Delta}
\def\gd{\delta}
\def\ep{\epsilon}
\def\gs{\sigma}
\def\Gs{\Sigma}
\def\gk{\kappa}
\def\gl{\lambda}
\def\Go{\Omega}
\def\go{\omega}
\DeclareMathAlphabet{\mathpzc}{OT1}{pzc}{m}{it}
\newtheorem{theorem}{Theorem}[section]
\newtheorem{lemma}[theorem]{Lemma}
\newenvironment{proof}[1][Proof]{\begin{trivlist}
\item[\hskip \labelsep {\bfseries #1}]}{\end{trivlist}}
\newenvironment{remark}[1][Remark]{\begin{trivlist}
\item[\hskip \labelsep {\bfseries #1}]}{\end{trivlist}}
\newcommand{\qed}{\nobreak \ifvmode \relax \else
      \ifdim\lastskip<1.5em \hskip-\lastskip
      \hskip1.5em plus0em minus0.5em \fi \nobreak
      \vrule height0.5em width0.5em depth0.00em\fi}
\begin{document}
\thispagestyle{empty}
\begin{flushright} \small
UUITP-09/14
 \end{flushright}
\smallskip
\begin{center} \LARGE
{\bf On twisted $N=2$ 5D super Yang-Mills theory}
 \\[12mm] \normalsize
{\bf  Jian Qiu$^{a,b}$ and Maxim Zabzine$^c$} \\[8mm]
 {\small\it
${}^a$Math\'ematiques, Universit\'e du Luxembourg,\\
 Campus Kirchberg, G 106,  L-1359 Luxembourg\\
      \vspace{.3cm}
${}^b$Max-Planck-Institut f\"ur Mathematik,\\
 Vivatsgasse 7,
53111 Bonn, Germany\\
      \vspace{.5cm}
${}^c$Department of Physics and Astronomy,
     Uppsala university,\\
     Box 516,
     SE-75120 Uppsala,
     Sweden\\
   }
\end{center}
\vspace{7mm}
\begin{abstract}
 \noindent
 On a five dimensional simply connected Sasaki-Einstein manifold, one can construct Yang-Mills theories coupled to matter with at least two supersymmetries.  The partition function of these theories localises on the contact instantons, however the contact instanton equations are not elliptic.  It turns out that these equations can be embedded into the Haydys-Witten equations (which are elliptic) in the same way the 4D anti-self-dual instanton equations are embedded in the Vafa-Witten equations. We show that under some favourable circumstances, the latter equations will reduce to the former by proving some vanishing theorems. It was also known that the Haydys-Witten equations on  product manifolds $M_5=M_4\times \BB{R}$  arise in the context of twisting the 5D maximally supersymmetric Yang-Mills theory. In this paper, we present the construction of twisted $N=2$ Yang-Mills theory on Sasaki-Einstein manifolds, and more generally on $K$-contact manifolds.
 The localisation locus of this new theory thus provides a covariant version of the Haydys-Witten equation.
\end{abstract}

\eject
\normalsize
\tableofcontents
\section{Introduction}\label{sec_intro}
Initiated by the Pestun's work \cite{Pestun:2007rz} the localization technique has been widely used for the exact
calculation of partition functions and other supersymmetric observables for the supersymmetric gauge theories
 in different dimensions.  In this work we are interested in analysing further the equations governing the localisation locus
  in 5D gauge theories, especially their implication on contact geometry.

 In the work \cite{Kallen:2012cs} (based on earlier works \cite{Nekrasov:1996cz} and  \cite{Baulieu:1997nj})
  a twisted version of 5D Yang-Mills theory has been constructed and the theory  localizes on contact instantons
\bea
 \star F = - \kappa \wedge F~,\label{contact-inst}
\eea
 where $\kappa$ is the contact form on the five dimensional manifold. These equations appeared previously also in \cite{Harland:2011zs}
  (see also  \cite{Corrigan:1982th} for first discussion of self-duality in higher dimensions and \cite{Fan} for the discussion of the odd dimensional case).
 On any simply connected Sasaki-Einstein manifolds, one can construct the $N=1$ 5D supersymmetric Yang-Mills theory with matter that possesses at least
 two supersymmetries.
 On special classes of these manifolds, namely the toric ones, the perturbative (zero instanton) part of the partition
 function can be calculated quite efficiently, see \cite{Kallen:2012va} for $S^5$,  \cite{Qiu:2013pta, Qiu:2013aga} for $Y^{p,q}$
 and \cite{Qiu:2014oqa} for more general instances of such manifolds.
 The full partition function for $S^5$  has been conjectured in  \cite{Kim:2012ava, Lockhart:2012vp, Kim:2012qf}
 and for the case of general simply connected toric Sasaki-Einstein manifolds in \cite{Qiu:2014oqa}. A
 derivation of the partition function based on first principle is not available so far, and it will require a better understanding of the contact instanton equations (\ref{contact-inst}).

   In fact, the twisted $N=1$ supersymmetric complex can be extended to any $K$-contact manifolds, and the equations (\ref{contact-inst}) arising out of the localisation have been investigated subsequently in \cite{Wolf:2012gz,Baraglia:2014gma,Pan:2014nha}, due to their potential bearing on the contact geometry. Also, some interesting observations have been made about the property of the perturbative part of the partition function, namely one can construct the partition function on the whole manifold by gluing together copies of the partition function on $\BB{R}^4\times S^1$, one copy for each closed Reeb orbit \cite{Qiu:2013pta, Qiu:2013aga} (the flow of the Reeb vector field does not in general form closed orbits, but usually there are isolated closed Reeb orbits for the generic choice of $\kappa$). This observation leads inevitably to the conjecture that the instanton sector enjoys the same property and can be constructed by gluing together flat space results. But the whole conjecture is predicated on the assumption that when one has only isolated Reeb orbits, the instantons tend to concentrate along those orbits, behaving as if they were point like particles propagating along the closed Reeb orbits. There is yet no satisfactory argument for this (see \cite{Pan:2014bwa} for some related observation), but the conjecture is echoed by what one has observed in 4D, where Pestun showed that there are no smooth instantons on $S^4$ but one has to add the point like solutions at the two poles \cite{Pestun:2007rz}. Also in 3D, where Taubes proved that the solution to the vortex equation is concentrated along the closed Reeb orbits, see \cite{Taubes_SWW}.

   Despite the supporting evidence, the lack of control of the analytical behaviour of the instantons has left the calculation for the instanton sector of the 5D case incomplete, compared to the more conclusive result obtained on $S^4$. As a first step to address this problem, we will embed the contact instanton equations into a larger set of equations known as the Haydys-Witten equation. In contrast to the former, the latter set is elliptic and the moduli problem can be stated in the standard terms. It has been shown
    by Anderson \cite{Anderson:2012ck} that, the Haydys-Witten equations on a manifold of the type $M_5=M_4\times \BB{R}$ arises as the equation of motion of the twisted maximally supersymmetric gauge theory. But apart from the restriction on the geometry, the supersymmetry algebra of Anderson closes on-shell, and thus the formulation is not suitable if one wants to deform the equations.
    Note that applying large deformations to certain equations has proved fruitful in the study of the Seiberg-Witten equations \cite{Witten_SW,Taubes1994}. One can formulate the deformation more easily in the presence of off-shell supersymmetry.

   The Haydys-Witten equations were proposed by Witten \cite{Witten_FK} in an attempt to understand the Khovanov knot homology from the gauge theory point of view. Independently these equations were also constructed by Haydys \cite{Haydys}, see also \cite{Cherkis:2014xua} for an understanding of these equations in terms of eight dimensional system and octonions. But these will not be the focus of this work.


   The paper is organised as follows:  In section \ref{s-HW-eqs} we go over the Haydys-Witten equation and derive some simple consequences using the energy functional.
   In section \ref{s-5D-YM} we show how one can construct the $N=2$ twisted super Yang-Mills on a K-contact manifold, starting from the formulation \cite{Hosomichi:2012ek} of vector- and hyper-multiplets. We show that after twisting, one can relax
    the   Sasaki-Einstein condition \footnote{The Sasaki-Einstein condition is needed in order to have the Killing spinors. One can certainly consider more general Killing spinor equations with more non-zero background field and  allow for more general manifolds, as was done in \cite{Pan:2013uoa}. However it remains to be done for the case of "squashed" Sasaki-Einstein manifold. For us it is much less laborious to twist the theory first and arrive at the cohomological complex of \cite{Kallen:2012cs}, from which the generalisation is straightforward.}.
   In section \ref{s-summary} we give a summary of the results and discuss some related conjectures.
   To make the paper self-contained we collect some relevant basic definitions of the contact geometry in an appendix.

 \section{The Haydys-Witten equations}\label{s-HW-eqs}

 In this section we discuss the relation between the contact instantons and the Haydys-Witten equations.

 \subsection{Contact instanton}

 We start by recalling some basic facts about the contact instantons. Consider a principal bundle over five
  dimensional manifold $M_5$ with connection $A$ and field strength $F$.
  If $M_5$ admits the contact metric structure $(\gk, \reeb, g)$ then we can define the following PDEs
  \be
  \iota_{\sreeb} F=0~,~~~~F_H^+=0~,\label{contact-instanton}
 \ee
 where $H$ denotes 'horizontal' and $+$ means the self-dual component.  The other notations will be explained shortly or collected in the appendix.
  We impose the anti-self duality condition along the horizontal plane $\xi$ and thus these equations are a
  natural lift of the anti-self-dual instanton equation from 4D to 5D.
  The two equations  (\ref{contact-instanton}) can be  rewritten as one single equivalent equation
  \be
   \star F = - \kappa \wedge F~.\label{contact-instanton-2}
  \ee
  We refer to this equation as anti-self-dual contact instanton. We can also consider the self-dual contact instanton equation,  $\star F = \kappa \wedge F$; but the anti-self dual contact instanton is singled out by the fact that it implies  the Yang-Mills equation $d_A\star F=-d_A(\gk\wedge F)=-d\gk\wedge F=0$.
     The manipulations use the fact that $d \gk$ is horizontal and self-dual, i.e. $\star d\gk = \gk \wedge d\gk$.  In what follows we concentrate on anti-self dual
      contact instanton equation (\ref{contact-instanton}).

 One would like to study the moduli space of the system (\ref{contact-instanton}). However a simple counting reveals that
the equations (\ref{contact-instanton}) are not elliptic and it can be problematic to employ the standard deformation theory for this case.
Although to a degree it is still possible to discuss the moduli space of solutions, see \cite{Baraglia:2014gma}.

\subsection{The Haydys-Witten equations}

In an attempt to understand the Khovanov knot homology from the gauge theory point of view
Witten  \cite{Witten_FK} proposed the system of elliptic equations on five manifold of the form $M_4 \times {\mathbb R}$.
Independently Haydys \cite{Haydys} suggested the system of elliptic equation on $M_5$ with some
additional structure and his equations degenetate to Witten's equation on $M_4 \times {\mathbb R}$. Thus
the Haydys equations can be thought of as covariantisation of Witten's equations.
Below we suggest a covariantisation of Witten's equations, which differs slightly from that of Haydys.
Our version is motivated by vanishing theorems and by the supersymmetry considerations.

Consider the contact metric manifold $M_5$ with $(\gk, \reeb, g)$ and define the following partial
differential equations\footnote{In Haydys' version \cite{Haydys}, the last term in \ref{HW-2} is replaced with $\nabla_{\sreeb}^AB$ which differs from $\iota_{\sreeb} d_A B$
by $\iota_{\sreeb} d_A B-\nabla_{\sreeb}^AB=-[J,B]$ where $[J, B]_{ij} = J_i^{~k} B_{kj} - B_i^{~k} J_{kj}$.}
\bea
&&{\cal A}=\iota_{\sreeb} F - (d_A^\dagger B)^H=0~,\label{HW-1}\\
&& {\cal B}= F_H^+ - \frac{1}{4} B \times B - \frac{1}{2} \iota_{\sreeb} d_A B =0~.\label{HW-2}\eea
To explain the notations, $A$ is the connection and $F$ is its field strength; all other fields transform in the adjoint representation of the gauge group.
The field $B$ is a horizontal self-dual 2-form $\Omega_H^{2+}(M_5)$, see (\ref{hor_self_dual}).
The superscript $(-)^H$ is the projection to the horizontal component: $(-)^H=(1-\gk\iota_{\sreeb})(-)$.  The superscript $+$ means the self-dual component, i.e. $F_H^+
=1/2(F_H+\iota_{\sreeb}\star F_H)$.
Furthermore $d_A = d -i [A, ~]$ and $d_A^\dagger$ is its adjoint with respect to the scalar product
  \be
  (\alpha, \beta) = {\rm Tr} \int\limits_{M_5} \alpha \wedge \star \beta~,~~~~||\alpha||^2 = (\alpha, \alpha)~.\label{scalar-product}
 \ee
 We recall that the space $\Omega_H^{2+}(M_5)$ has a structure of imaginary quaternions, with the product defined as (we use the convention of Witten, see subsection 5.2.5 of \cite{Witten_FK})
 \bea
 (X\times Y)_{mn}=X_{mp}Y_n^{~p}-X_{np}Y^{~p}_m~,~~~X,Y\in\Go_H^{2+}~,\nn
 \eea
 with the indices raised or lowered with the metric. If $X,Y$ are in the adjoint
 \bea
  (X\times Y)^a_{mn}=\frac12f^a_{bc}\big(X^b_{mp}(Y^c)^{~p}_n-X^b_{np}(Y^c)^{~p}_m\big)~,\nn
  \eea
 where $f_{bc}^a$ is the structure constant of the Lie algebra.

 In what follows we refer to the equations (\ref{HW-1})-(\ref{HW-2}) as the Haydys-Witten equations.
  It is straightforward to show that this system of PDEs is elliptic.
  Curiously the  equations (\ref{HW-1})-(\ref{HW-2}) can be combined into a single equation as follows
\bea
\gk\wedge\hat F=-\star\hat F+d_AB~,\label{single_eqn}\eea
where
\bea\hat F=F-\frac14 B\times B~.\nn\eea
To see the equivalence, one first applies $\iota_{\sreeb}$ to both sides and get
\bea \hat F-\gk\iota_{\sreeb}\hat F=-\iota_{\sreeb}\star\hat F+\iota_{\sreeb}d_AB~,\nn
\eea
then use (\ref{used_later}) and that $\go_H=\go-\gk\iota_{\sreeb}\go$ to get
$\hat F_H=-\iota_{\sreeb}\star\hat F_H+\iota_{\sreeb}d_AB$, which is (\ref{HW-2}). To get $(\ref{HW-1})$, we just need to hit
the equation (\ref{single_eqn}) by $\iota_{\sreeb} \star$ and use
the relation
\bea
 (d^{\dag}_AB)^H=\iota_{\sreeb}\star d_AB~,\label{simple_formula}
 \eea
 which can be proved as follows
\bea
d_A^{\dag}B=\star d_A\star B=\star d_A (\gk \wedge B)=\star (d \gk\wedge  B)-\star( \gk \wedge  d_A B)~,\nn
\eea
where we used
$\kappa \wedge B = \star B$ since $B \in \Omega^{2+}_H (M_5)$.
The first term vanishes upon projecting to the horizontal plane, thus
\bea
 (d_A^{\dag}B)^H=-(\star (\gk \wedge d_A B))^H=(\iota_{\sreeb}\star d_A B)^H=\iota_{\sreeb}\star d_A B~,\nn\eea
where we have used (\ref{used_later}). The Haydys-Witten equations (\ref{single_eqn}) are invariant under the following
 scaling symmetry
\bea
 g \rightarrow \lambda^2g~, ~~\gk \rightarrow \lambda\gk~,~~  \reeb \rightarrow \lambda^{-1} \reeb~,~~
 B\rightarrow \lambda B~,~~A\rightarrow A~,\label{scaling-sym}
\eea
 where $\lambda$ is real non-zero number.

   If one has additional vanishing theorem that shows $B=0$, then the equation (\ref{single_eqn})
    collapses to the contact instanton equations (\ref{contact-instanton}). Indeed in coming sections we will prove
     the following:
For a Sasaki manifold with $s+4>0$ pointwise\footnote{Here $s$ is the scalar curvature, this condition essentially saying that the scalar curvature of the transverse K\"ahler metric is greater than 0 everywhere. We will make a remark about how to understand the bound later.}, the $B$ field does not vanish but $B\times B=0$
 and $d_A B=0$ and thus the Haydys-Witten equations degenerate to contact instanton equations. In particular the condition
 $s+4>0$ is true for the Sasaki-Einstein manifolds, with the Reeb vector field close enough to the standard Reeb vector field.
If furthermore, the connection $A$ is an irreducible connection, then $B=0$ altogether.

\subsection{Energy functional}

In this section we restrict ourselves to the closed K-contact manifolds, namely the Reeb vector $\reeb$ is Killing with respect to the compatible metric $g$ (for the definition, see (\ref{can_metric})).

 Now we will investigate  the equations (\ref{HW-1})-(\ref{HW-2})  using the energy functional.
We use the scalar product and the norm defined in (\ref{scalar-product}) under which the three subspaces $\Omega_H^{2+}(M_5)$, $\Omega_H^{2-}(M_5)$ and  $\Omega_V^{2}(M_5)$ are mutually orthogonal.
   Consider the square of the equations (\ref{HW-1})
   \be
   ||{\cal A}||^2= ({\cal A},{\cal A}) = ||\iota_{\sreeb} F||^2 - 2 (\iota_{\sreeb} F, d_A^\dagger B)+
    ||(d_A^\dagger B)^H||^2~,
   \ee
     where for the middle term we can do the following rewriting
   \be
   (\iota_{\sreeb} F, d_A^\dagger B)= (d_A \iota_{\sreeb} F, B) = ({\cal L}_{\sreeb}^A F, B)~,\label{needed_later_I}
   \ee
    where ${\cal L}_{\sreeb}^A$ stands for the gauge covariant version of the Lie derivative, ${\cal L}_{\sreeb}^A = d_A \iota_{\sreeb} + \iota_{\sreeb} d_A $.
    Next consider the square of the equations (\ref{HW-2})
   \be
&& || {\cal B} ||^2 =  || F_H^+ - \frac{1}{4} B \times B||^2 - (  F_H^+ - \frac{1}{4} B \times B, \iota_{\sreeb} d_A B)
 + \frac{1}{4} ||\iota_{\sreeb} d_A B||^2 ~,\nn
   \ee
   where for the middle term
   \be
   (  F_H^+ - \frac{1}{4} B \times B, \iota_{\sreeb} d_A B) = (  F_H^+ - \frac{1}{4} B \times B, {\cal L}_{\sreeb}^A B)~.\nn
   \ee
Therefore  we have
   \be
   (  F_H^+ - \frac{1}{4} B \times B, {\cal L}_{\sreeb}^A B) = (F_H^+, {\cal L}_{\sreeb}^A B) - \frac{1}{4} (B \times B, {\cal L}_{\sreeb}^A B)=
    (F,  {\cal L}_{\sreeb}^A B) - \frac{1}{4} (B \times B, {\cal L}_{\sreeb}^A B) ~,\nn
   \ee
   where the last term will vanish under the current assumption on $\reeb$, as it is a total derivative.  The term $(F,  {\cal L}_{\sreeb}^A B) $ combines with the term $({\cal L}_{\sreeb}^A F, B)$
   in (\ref{needed_later_I}) into a total derivative.  Thus on K-contact $M_5$  we get the following energy functional
   \be
  E= \frac{1}{2} ||{\cal A}||^2 + ||{\cal B}||^2 = \frac{1}{2} ||\iota_{\sreeb} F||+
    \frac{1}{2} ||(d_A^\dagger B)^H||^2
      +  \frac{1}{4} ||\iota_{\sreeb} d_A B||^2 +
  || F_H^+ - \frac{1}{4} B \times B ||^2~. \label{energy}
   \ee
Now we arrive at the following system of equations on a closed K-contact manifold
\begin{equation}\label{HW_eqn}
\begin{split}
&\iota_{\sreeb} F=0~,\\
&F_H^+ - \frac{1}{4} B \times B=0~,\\
&(d_A^\dagger B)^H=0~,\\
&\iota_{\sreeb} d_A B=0~.
\end{split}
\end{equation}
   The last two equations $(d_A^\dagger B)^H=0$ and $\iota_{\sreeb} d_A B=0$ can be combined into a single equation $d_A B=0$ as follows:
   Using the relation\footnote{Note that this is a stronger version of (\ref{simple_formula}), the proof of this relation requires K-contact manifolds while the proof of (\ref{simple_formula}) does not.}
   \bea
   \star d_AB=\L^A_{\sreeb}B+\gk d^{\dag}_AB~,\nn
   \eea
   one has $\iota_{\sreeb}\star d_AB=\iota_{\sreeb}(\gk d^{\dag}_AB)=(d^{\dag}_AB)^H=0$.
   Then $\iota_{\sreeb}\star d_AB=\iota_{\sreeb} d_A B=0$ implies $d_AB=0$. Conversely, if $d_AB=0$ we get
   $\L^A_{\sreeb}B+\gk d^{\dag}_AB=0$, but $\L^A_{\sreeb}B=\{\iota_{\sreeb},d_A\}B=0$ and so $\gk d^{\dag}_AB=0$ giving $(d^{\dag}_AB)^H=0$.
   Therefore we conclude that on K-contact manifold the Haydys-Witten equations (\ref{HW-1})-(\ref{HW-2})  collapse to
    the following system of equations
   \begin{equation}\label{perp_dk}
   \begin{split}
   & \iota_{\sreeb} F=0~, \\
   &  F_H^+ - \frac{1}{4} B \times B=0~, \\
   & d_A B=0~.\\
   \end{split}
   \end{equation}

\subsection{General K-contact case}
The above conditions on $B$ are very restrictive, which makes one wonder if one can establish $B=0$ altogether, and in particular, under which geometry do the Haydys-Witten equations (\ref{HW-1})-(\ref{HW-2}) collapse to the contact instanton equations (\ref{contact-instanton-2}). The following discussion
is analogous to \cite{Vafa:1994tf}, and is based on the repeated use of the Weizenbock formula.

Let $B,~C\in\Go_H^{2+} (M_5)$, then
\bea
(d_A^{\dag}C,d_A^{\dag}B)&=&-\frac12(\L^A_{\sreeb}C,\L^A_{\sreeb}B)+\frac12(J\cdotp C,J\cdotp B)+(F,C\times B)\nn\\
&&+\Tr\int\limits_{M_5} \left ( \frac{1}{4}\nabla C\cdotp\nabla B-\frac14C_{ij}X^{ijkl} B_{kl} \right ) {\rm vol}_g~,\label{Weizenbock}
\eea
where $J\cdotp B=J^{pq}B_{pq}$, $\nabla C\cdotp\nabla B=(\nabla_i C_{jk})(\nabla^iB^{jk})$ with $\nabla$ containing both the Levi-Civita and gauge connections. The second term in the second line is a quadratic form defined on the horizontal self-dual 2-forms, with
\bea
X_{pqrs}=R_{pqrs}-\frac12(Ric\bar\wedge g)_{pqrs}=W_{pqrs}-\frac1{6}(Ric\bar\wedge g)_{pqrs}-\frac{s}{24}(g\bar\wedge g)_{pqrs}~,\label{B_bilinear}\eea
where $R$ is the Riemann tensor, $Ric$ is the Ricci tensor, $s$ is the Ricci scalar, $W$ is the Weyl-tensor whose definitions and
   that of $\bar\wedge$ are collected in the appendix.


Using the Weizenbock formula (\ref{Weizenbock}) and the relation $d_A^{\dag}B=(d_A^{\dag}B)^H-\gk (J\cdotp B)$
the energy functional (\ref{energy})
 can be rewritten as follows
\bea
E &=& \frac{1}{2} ||\iota_{\sreeb} F||^2 + ||F_H^+||^2 + \frac{1}{16}||B \times B||^2+\frac18\Tr\int\limits_{M_5}  ( \nabla B\cdotp\nabla B)~ {\rm vol}_g\nn\\
 &&-\frac14||J\cdotp B ||^2-\frac18\Tr\int\limits_{M_5} ( B_{ij}X^{ijkl}B_{kl})~ {\rm vol}_g~,\label{numbered}\eea
where one notices that the curvature term $F\wedge\star(B\times B)$ is cancelled.
 In the energy functional only the second line is not positive definite and thus the central question is to establish when are these terms
  positive definite.

Note that the expression of $X$ in (\ref{B_bilinear}) has no gauge connection, so its positivity can be answered regardless of the gauge theory.
For a generic K-contact manifold, we follow the treatment of \cite{Vafa:1994tf} and consider the Weizenbock formula (\ref{Weizenbock}) again, without the gauge theory part
\bea
||(d^{\dag}w)^H||^2=-\frac12||{\cal L}_{\sreeb}w||^2-\frac12||J\cdotp w||^2-\frac14 \int\limits_{M_5} (w_{ij}X^{ijkl}w_{kl})~{\rm vol}_g+\frac14 \int\limits_{M_5} \nabla w\cdotp\nabla w ~{\rm vol}_g~,\nn
\eea
with $w\in\Go_H^{2+}$. Restricting to $w$ such that $\L_{\sreeb}w=0$, i.e. $w$ is basic (see below), we have
\bea
||(d^{\dag}w)^H||^2=-\frac12||J\cdotp w||^2-\frac14\int\limits_{M_5} (w_{ij}X^{ijkl}w_{kl}) {\rm vol}_g+\frac14 \int\limits_{M_5} \nabla w\cdotp\nabla w~{\rm vol}_g  ~.\label{numbered_I}\eea
From here one can show that the first two terms of the above (or the second line of (\ref{numbered})) will \emph{never} be positive definite.

To see this we need some facts about the \emph{basic cohomology} associated to the Reeb foliation, which are collected toward the end of the appendix.
We assume that $\go\in\Go^2(M_5)$ such that $\iota_{\sreeb}\go=0=\L_{\sreeb}\go$, i.e. they are basic forms.
We also assume that $M_5$ is simply connected, then the short exact sequence (\ref{exact_sequence}) is applicable. So the basic cohomology group $H^2_B$ consists of the entire $H^2$ plus an extra class generated by $d\gk$. Every element  $\go\in H_B^2$ has a harmonic representative such that $d\go=(d^{\dag}\go)^H=0$. In particular, it is easy to check that $d\gk$ is harmonic. Now if the first two terms in (\ref{numbered_I}) were positive, it would imply that $d\gk$ is covariantly constant, which is not true
\bea
\reeb^j\nabla_i(d\gk)_{jk}=\nabla_i(\reeb^j(d\gk)_{jk})-(\nabla_i\reeb^j)(d\gk)_{jk}=2(g_{ik}-\reeb_i\reeb_k)\neq0~.\nn
\eea
In view of this the question of the vanishing of $B$ is left unanswered without further assumptions on the
 geometry. To proceed, one notices that it is the class $[d\gk]$ that spoils the positivity argument, so one
 needs to apply the Weizenbock formula away from it.

%
%
%
%
%

\subsection{Sasaki case}
We now restrict the discussion to the Sasaki manifolds (see the appendix  for the definitions).
As a first remark, the three components of $\Go_H^{2+}(M_5)$ have the structure of the imaginary quaternions, we will denote by $B_3$ the component of $B$ that is proportional to $J$ or $d\gk$, and $B_1,~B_2$ the components that are orthogonal to $J$. In fact, $B_1\pm iB_2$ will be of type (2,0) or (0,2) with respect to the horizontal complex structure. Furthermore $d_A^{\dag}B_{1,2}$ are automatically horizontal. One does not need the vanishing of all three $B$'s for the Haydys-Witten equations to collapse to instanton equations, the vanishing of two will ensure $B\times B=0$ and will therefore be sufficient. This is what will happen eventually.

We first establish some orthogonality statements.
\begin{lemma}
Let $B_3=Jf$, with $f$ an adjoint scalar, and $C\in\Go_H^{2+}(M_5)$ be orthogonal to $J$ and in the adjoint representation too. Then
\bea \nabla B_3\cdotp\nabla C=0\nn\eea
on a Sasaki manifold.
\end{lemma}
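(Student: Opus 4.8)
The plan is to prove the identity pointwise, by expanding the contraction $\nabla B_3\cdotp\nabla C=(\nabla_i (B_3)_{jk})(\nabla^i C^{jk})$ with the Leibniz rule and the Sasaki structure equations, and then showing that the two resulting terms vanish separately. Since $B_3=Jf$ with $J$ the gauge-singlet transverse K\"ahler form (proportional to $d\gk$) and $f$ an adjoint scalar, I first write
\bea
\nabla_i (B_3)_{jk}=(\nabla_i f)\,J_{jk}+f\,\nabla_i J_{jk}~,\nn
\eea
where $\nabla f$ is the gauge-covariant derivative while $\nabla J$ is the plain Levi-Civita derivative. The geometric input that drives everything is the Sasaki identity
\bea
\nabla_i J_{jk}=g_{ij}\gk_k-g_{ik}\gk_j~,\nn
\eea
which expresses the covariant derivative of $J$ purely through $g$ and $\gk$ (equivalently the Sasaki conditions $\nabla_X\reeb\propto JX$ and the corresponding statement on $\nabla J$). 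Establishing this relation in the conventions of the appendix is the first step.

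Contracting with $\nabla^i C^{jk}$ then produces two pieces. For the piece $(\nabla_i f)(\nabla^i C^{jk})J_{jk}$, I would use that $C$ is pointwise orthogonal to $J$, i.e. $C^{jk}J_{jk}=0$ everywhere as an adjoint-valued function. Differentiating this and inserting the Sasaki identity gives $C^{jk}\nabla^i J_{jk}=2C^{ik}\gk_k=2(\iota_{\sreeb}C)^i=0$ by horizontality of $C$, so that $(\nabla^i C^{jk})J_{jk}=\nabla^i(C^{jk}J_{jk})=0$ and the whole piece vanishes. For the remaining piece $f\,(\nabla_i J_{jk})(\nabla^i C^{jk})$, substituting the Sasaki identity and using the antisymmetry of $C$ collapses it to $2f\,\gk_k\nabla_j C^{jk}=-2f\,\iota_{\sreeb}(d_A^{\dag}C)$; this vanishes because for $C$ orthogonal to $J$ the one-form $d_A^{\dag}C$ is horizontal, the fact recorded just before the lemma. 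Hence $\nabla B_3\cdotp\nabla C=0$, and in fact as an adjoint-valued identity, even before taking a trace.

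The main obstacle is not the algebra, which is a short manipulation using only antisymmetry and $\iota_{\sreeb}C=0$, but assembling the correct geometric facts in the appendix's conventions: pinning down the normalisation and sign in $\nabla_i J_{jk}=g_{ij}\gk_k-g_{ik}\gk_j$, and justifying that $d_A^{\dag}C$ is horizontal whenever $C$ is orthogonal to $J$. Once these two inputs are fixed, the vanishing of both terms is immediate, and the lemma then feeds directly into the decomposition of the $\Tr\int\nabla B\cdotp\nabla B$ term of the energy functional (\ref{numbered}) by killing the cross terms between $B_3$ and $B_{1,2}$.
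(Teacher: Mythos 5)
Your proposal is correct and is essentially the paper's own argument: expand with the Leibniz rule, insert the Sasaki identity $\nabla_iJ_{jk}=g_{ij}\kappa_k-g_{ik}\kappa_j$ (the index form of (\ref{integrable_J})), and kill both resulting terms pointwise using $J_{jk}C^{jk}=0$ and $\iota_{\sreeb}C=0$. The only difference is cosmetic: where you invoke the horizontality of $d_A^{\dagger}C$, the paper re-derives that same fact inline from $\nabla^i\reeb_j\propto J^i_{~j}$ together with the Leibniz rule applied to $\reeb_jC^{jk}=0$, so the two arguments are step-for-step equivalent.
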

\begin{proof}
A direct computation shows
  \bea \nabla_i (J_{jk}f)\nabla^iC^{jk}&=&(f\nabla_iJ_{jk}+J_{jk}\nabla_if)\nabla^iC^{jk}=\big(-2f\reeb_jg_{ik}+J_{jk}\nabla_if\big)\nabla^iC^{jk}\nn\\
  &=&2f(\nabla^i\reeb_j)g_{ik}C^{jk}-(\nabla^iJ_{jk})\nabla_ifC^{jk}\nn\\
  &=&-2fJ^i_{~j}g_{ik}C^{jk}-\nabla_if(-2\reeb_j\gd^i_k)C^{jk}=0~,\nn\eea
  where in the second step one needs the integrability condition (\ref{integrable_J})\qed
\end{proof}
\begin{lemma}
Let $B_3$ and $C$ be as above, then
\bea
B_3^{ij}X_{ijkl}C^{kl}=0\nn
\eea
on a Sasaki manifold.
\end{lemma}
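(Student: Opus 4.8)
The plan is to reduce the statement to a pointwise algebraic identity for the curvature tensor $X$ and then exploit the special type of $C$. Since $B_3=Jf$ with $f$ an adjoint scalar, we have $B_3^{ij}X_{ijkl}C^{kl}=f\,J^{ij}X_{ijkl}C^{kl}$, so it suffices to prove that the horizontal $2$-form $(k,l)\mapsto J^{ij}X_{ijkl}$ has no component along the directions in which $C$ lives. Concretely, I would show that $J^{ij}X_{ijkl}$ is a linear combination of $J_{kl}$ and of terms carrying an explicit factor $\reeb_k$ or $\reeb_l$. The first type is annihilated upon contraction with $C$ because $C$ is orthogonal to $J$, i.e. $J^{kl}C_{kl}=0$; the second type is annihilated because $C$ is horizontal, i.e. $\iota_{\sreeb}C=0$ and hence $\reeb^kC_{kl}=0$. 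It is convenient here to use the pair symmetry $X_{ijkl}=X_{klij}$, which holds since both $R_{ijkl}$ and the Kulkarni--Nomizu product $(Ric\bar\wedge g)_{ijkl}$ in (\ref{B_bilinear}) are invariant under the swap $(ij)\leftrightarrow(kl)$, so $J$ may be contracted into either index pair.

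First I would expand $X=R-\tfrac12(Ric\bar\wedge g)$ and treat the two pieces separately. Using the definition of $\bar\wedge$ together with the $J$-invariance of the horizontal Ricci tensor on a Sasaki manifold (a consequence of the transverse K\"ahler structure, the vertical components producing only Reeb factors), the contraction $J^{ij}(Ric\bar\wedge g)_{ijkl}$ collapses to a multiple of the Ricci form $\rho_{kl}=Ric_{km}J^{m}{}_{l}$, up to terms proportional to $J_{kl}$ and to $\reeb$; in particular the shift $Ric=Ric^{T}-2g$ relating the full and transverse Ricci tensors contributes only a multiple of $J_{kl}$, because $(g\bar\wedge g)_{ijkl}J^{ij}\propto J_{kl}$. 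For the other piece I would compute $J^{ij}R_{ijkl}$ using the Sasaki curvature identities already invoked in the proof of the preceding lemma, namely $\nabla_i\reeb_j=-J_{ij}$, the expression for $\nabla_iJ_{jk}$ from (\ref{integrable_J}), and the resulting $R_{ijkl}\reeb^l=\reeb_ig_{jk}-\reeb_jg_{ik}$. Splitting the Riemann tensor into its transverse K\"ahler part and its Reeb-direction corrections, the transverse part yields the standard K\"ahler relation $J^{ij}R^{T}_{ijkl}=2\rho_{kl}$, whereas every correction term is a product of $J$ with $g$ or $\reeb$ and is therefore again proportional to $J_{kl}$ or carries a Reeb factor (using $\iota_{\sreeb}J=0$).

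The crux of the argument -- and the step I expect to be the main obstacle -- is the \emph{exact} cancellation of the Ricci-form contributions: the $2\rho_{kl}$ arising from $J^{ij}R_{ijkl}$ must be cancelled precisely by the $-\tfrac12(4\rho_{kl})$ arising from $-\tfrac12 J^{ij}(Ric\bar\wedge g)_{ijkl}$. This cancellation is essential, since $\rho_{kl}$ generically has a primitive $(1,1)$-part that is horizontal and $J$-orthogonal, exactly the type that couples nontrivially to $C$; it is only the specific combination $X=R-\tfrac12 Ric\bar\wedge g$ that removes it (note also that the full and transverse Ricci forms differ only by a multiple of $J$, so the bookkeeping is insensitive to this choice). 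Pinning down the numerical coefficients and keeping the transverse/Reeb decomposition exactly consistent, so that nothing survives except multiples of $J_{kl}$ and Reeb terms, is the delicate part of the computation. Once this is established, contracting the residual $J^{ij}X_{ijkl}$ against $C^{kl}$ and invoking $J^{kl}C_{kl}=0$ and $\reeb^kC_{kl}=0$ gives $B_3^{ij}X_{ijkl}C^{kl}=0$, which is the claim.
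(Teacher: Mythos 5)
Your overall route---contract $J$ into $X$ first and show the resulting horizontal 2-form has components only along $J_{kl}$ and the Reeb direction---is viable, but the step you yourself single out as the crux rests on a misconception, and it is exactly the step you leave unproven. You claim the cancellation of the Ricci-form contributions between $J^{ij}R_{ijkl}$ and $-\tfrac12 J^{ij}(Ric\bar\wedge g)_{ijkl}$ is \emph{essential} because $\rho_{kl}$ ``generically has a primitive $(1,1)$-part \ldots exactly the type that couples nontrivially to $C$.'' This is false: on the 4-dimensional horizontal plane the primitive $(1,1)$ forms are the \emph{anti-self-dual} ones, so a horizontal self-dual 2-form orthogonal to $J$---which is what $C$ is---has no $(1,1)$ component at all; it is of type $(2,0)+(0,2)$, as the paper notes when introducing $B_{1,2}$. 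Any $(1,1)$ form therefore pairs to zero with $C$ pointwise, so $\rho_{kl}C^{kl}=0$ already follows from the $(1,1)$-ness of the (transverse) Ricci form, a fact you invoke anyway for the Kulkarni--Nomizu piece. The two Ricci-form contributions thus vanish against $C$ \emph{separately}, no matching of coefficients in the combination $X=R-\tfrac12\, Ric\bar\wedge g$ is needed, and your assertion that ``it is only the specific combination $X$ that removes it'' is wrong. Since your proposal explicitly defers this ``delicate'' coefficient computation, it is incomplete precisely at its self-declared central step. (For the record, the cancellation does occur---the paper's formula (\ref{B3XB3}), $B_3^{mn}X_{mnpq}B_3^{pq}\,{\rm vol}_g=-12\,B_3\wedge\star B_3$, is consistent with $J^{ij}X_{ijkl}$ being a pure multiple of $J_{kl}$ modulo Reeb terms---but the lemma does not hinge on it.)

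By contrast, the paper's proof goes the other way and needs no coefficients at all: it contracts $X$ against $C$ and uses type considerations twice. The term $B_3^{ij}R_{ijkl}C^{kl}$ vanishes because (\ref{Riemann_20}) expresses the $(2,0)+(0,2)$ part of the Riemann tensor---the only part $C$ can see---as explicit products of $g$ and $J$, whose contraction with $J^{ij}$ dies by $J^{kl}C_{kl}=0$; the Kulkarni--Nomizu term reduces via (\ref{Riemann_11}) to $2(c_1)_{kl}C^{kl}$ with $(c_1)_{mn}=R_{mp}J^{p}{}_{n}$ (your $\rho$), which vanishes because $c_1$ is of type $(1,1)$. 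If you repair your argument by replacing the cancellation step with this pointwise type-orthogonality, your reduction goes through essentially as the paper's does.
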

\begin{proof}
We recall that
\bea B_3^{ij}X_{ijkl}C^{kl}=B_3^{ij}(R_{ijkl}-2R_{pi~k}^{~~p}g_{jl})C^{kl}~.\nn\eea
By using (\ref{Riemann_20}) one sees that the first term vanishes, since $C$ is of type (0,2) or (2,0).

Using (\ref{Riemann_11}), the second term gives $2(c_1)_{kl}C^{kl}$, where $c_1$ is defined after (\ref{Riemann_11}), and is the analogue of the first Chern class in the K\"ahler case.
A direct calculation then shows  that $c_1$ is (1,1). Hence the second term also vanishes\qed
\end{proof}
It is also quite clear that
\bea (\L^A_{\sreeb}B_3,\L^A_{\sreeb}C)=0~.\label{numbered_II}
\eea

Now we apply the Weizenbock formula to the right hand side of
\bea
||(d_A^{\dag}B)^H||^2-||(d_A^{\dag}B_3)^H||^2=||d_A^{\dag}\tilde B||^2+2(d_A^{\dag}\tilde B,(d_A^{\dag}B_3)^H)~,\nn
\eea
where we use $\tilde B$ to denote those components of $B$ that are orthogonal to $J$, and we can remove the projector ${}^H$ from the second term
\bea
 ||d_A^{\dag}\tilde B||^2+2(d_A^{\dag}\tilde B,d_A^{\dag}B_3)&=&\frac14\Tr\int (\nabla\tilde B\cdotp\nabla \tilde B-\tilde BX\tilde B)~\textrm{vol}_g\nn\\
&&+(F,\tilde B\times \tilde B)+2(F,\tilde B\times B_3)-\frac12||\L_{\sreeb}\tilde B||^2~,\nn
\eea
where we have used the two previous lemmas and (\ref{numbered_II}) to eliminate some of the cross terms. We also notice that since
$B_3\times B_3=0$,
\bea
 (F,\tilde B\times \tilde B)+2(F,\tilde B\times B_3)=(F,B\times B)~.\nn
 \eea
After this manipulation, the energy functional becomes
\bea
E&=&\frac{1}{2}||\iota_{\sreeb} F||+\frac{1}{2} ||(d_A^\dagger B_3)^H||^2+\frac{1}{4} ||\L^A_{\sreeb}B_3||^2+||F_H^+||^2+\frac{1}{16}||B \times B ||^2\nn\\
&&+\frac18\Tr\int (\nabla \tilde B\cdotp\nabla \tilde B-\tilde BX\tilde B)~\textrm{vol}_g~.\nn\eea
The key issue is still the negativity of the last term, which we can now compute explicitly.
Now let $Z$ be horizontal of type (2,0) and $\bar Z$ of type (0,2), and then we have
\bea
 Z^{mn}R_{mnpq}\bar Z^{pq}~{\rm vol}_g =4Z\wedge \star\bar Z~,~~~~
Z^{mn}(R\bar\wedge g)_{mnpq}\bar Z^{pq}~{\rm vol}_g=2(s-4)Z\wedge \star\bar Z~.\label{two_relations}
\eea
The first relation follows immediately from (\ref{Riemann_20}). For the second one, a direct calculation shows
\bea
&& Z^{mn}\bar Z^{pq}(R\bar\wedge g)_{mnpq}=-4Z^{mn}\bar Z^{pq}J_p^{~r}J_q^{~s}R_{mr}g_{ns}\nn\\
&&\hspace{3.4cm}=-4Z^{mn}\bar Z^{pq}J_p^{~r}J_{qn}R_{mr}=-4 Z^{mn}\bar Z^{pq}(c_1)_{mp}J_{nq}~,\nn\eea
where in the first step we have inserted two $J$'s and an accompanying $-$ sign because $\bar Z$ is of type (0,2). Next one uses the fact that $c_1$ is of type (1,1)
\bea
 Z^{mn}\bar Z^{pq}(c_1)_{mp}J_{nq}=\frac1{16}Z^{[mn}\bar Z^{pq]}(c_1)_{mp}J_{nq}=-\frac1{24}(Z\wedge\bar Z)^{mnpq}
 (c_1\wedge J)_{mnpq}~.\nn\eea
Since $c_1\wedge J$ is a horizontal 4-form, it has to be proportional to $J\wedge J$
\bea c_1\wedge J=\frac14(s-4)J\wedge J~,\nn\eea
where $s$ is the Ricci scalar. Then we arrive at
\bea Z^{mn}(R\bar\wedge g)_{mnpq}\bar Z^{pq}~\textrm{vol}_g=2(s-4)Z\wedge\star \bar Z~.\nn\eea
And finally the quadratic form goes to
\bea
\Tr\int\limits_{M_5}(\tilde B_{ij}X^{ijkl}\tilde B_{kl})~{\rm vol}_g=\Tr\int\limits_{M_5}(8-s)\tilde B\wedge\star\tilde B~.\label{B12XB12}
\eea
For completeness we will also give $B_3XB_3$
\bea B_3^{mn}X_{mnpq}B_3^{pq}~\textrm{vol}_g=-12B_3\star B_3~.\label{B3XB3}\eea

As an example of the above calculation, take $S^5$ with the standard metric. Use the presentation of $X$ as in (\ref{B_bilinear}), then the Weyl tensor vanishes since the metric is conformally flat, also $Ric=4g$ and $s=20$. So the other two terms in $X$ give
\bea
X=0-\frac16\cdotp 4g\bar\wedge g-\frac{20}{24}g\bar\wedge g=-\frac32 g\bar\wedge g~,\nn\eea
then $BXB~{\rm vol}_g =-12(B\wedge \star B)$. This agrees with (\ref{B12XB12}) and (\ref{B3XB3}).

To summarise the above computation, we come to the following conclusion.
If a Sasaki manifold has $s>8$ point wise, then $\tilde B=0$. When this happens, the Haydys-Witten
equations (\ref{perp_dk}) degenerate to the contact instanton equations.

 However we can improve this bound significantly. Unlike the symmetry \eqref{scaling-sym}, the Sasaki-metric \eqref{can_metric} does posses another symmetry. If we apply the so-called D-homothetic transformations with a parameter $a >0$
 to the Sasaki structure
\bea
\gk\to a\gk~,~~~~\reeb\to a^{-1}\reeb~,~~~~g \to a g + (a^2-a) \gk \otimes \gk \label{scaling-sym_I}
\eea
with the complex structure $J$ is unchanged then we get another Sasaki structure (see \cite{BoyerGalicki} for more explanation).
 This transformation scales the transverse metric by $a$ and vertical one by $a^2$. This is not a symmetry of the
Haydys-Witten equations (\ref{HW-1}) and (\ref{HW-2}), but preserves the \emph{decoupled} Haydys-Witten
 equations \eqref{perp_dk} if we also let $B\to a^{1/2}B$.

Under this symmetry the scalar curvature changes as
\bea
 s\to a^{-1}(s+4)-4~.\label{scaling_curvature}\eea
Thus provided $s>-4$ everywhere, one can use this scaling to make $s>8$ everywhere and making the vanishing result earlier applicable. To summarise
\begin{theorem}
  For a compact Sasaki 5-manifold, if $s>-4$ everywhere, then the Haydys-Witten
   equations \eqref{perp_dk} degenerate into the contact instanton equations
  \bea\iota_{\sreeb} F=F_H^+=d_A B=0~,\nn\eea
  and $B$ is proportional to $d\gk$.
\end{theorem}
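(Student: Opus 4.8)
The plan is to read the theorem off the Sasaki-refined energy functional and then bootstrap the curvature bound by a scaling argument. Recall that on a closed K-contact manifold the four squared equations assembled in \eqref{HW_eqn} are equivalent to the decoupled system \eqref{perp_dk}, and that their common solution set is precisely the zero locus of the energy $E$. So the whole problem reduces to showing that, under the stated hypothesis, $E=0$ forces the vanishing of $\tilde B$ (the components of $B$ orthogonal to $J$), after which $B=B_3$ is proportional to $J$, hence to $d\gk$, and $B\times B=B_3\times B_3=0$ yields $F_H^+=0$ together with $\iota_{\sreeb}F=0$ and $d_AB=0$.

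First I would invoke the Sasaki form of the energy established above, in which every summand is manifestly non-negative except the single integral $\tfrac18\Tr\int(\nabla\tilde B\cdot\nabla\tilde B-\tilde B X\tilde B)\,\mathrm{vol}_g$. Substituting the explicit quadratic form \eqref{B12XB12}, $\Tr\int\tilde B X\tilde B=\Tr\int(8-s)\,\tilde B\wedge\star\tilde B$, this term becomes $\tfrac18\Tr\int(\nabla\tilde B\cdot\nabla\tilde B)\,\mathrm{vol}_g+\tfrac18\Tr\int(s-8)\,\tilde B\wedge\star\tilde B$. The Weitzenböck gradient piece is a genuine norm and is non-negative, while $\Tr\int\tilde B\wedge\star\tilde B=\|\tilde B\|^2$ by \eqref{scalar-product}. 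Hence \emph{if $s>8$ pointwise} the entire integral is non-negative, every summand of $E$ is non-negative, and $E=0$ immediately forces $\Tr\int(s-8)\,\tilde B\wedge\star\tilde B$ to vanish; since $s-8>0$ everywhere this gives $\tilde B=0$. This proves the statement under the stronger hypothesis $s>8$.

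The genuinely new step is to weaken $s>8$ to $s>-4$ by the D-homothetic rescaling \eqref{scaling-sym_I}. The facts I would use are that this transformation sends one Sasaki structure to another with $J$ unchanged, that it preserves the decoupled equations \eqref{perp_dk} provided one also sets $B\to a^{1/2}B$, and that the scalar curvature transforms by \eqref{scaling_curvature}, $s\mapsto a^{-1}(s+4)-4$. Given $s>-4$ everywhere on the compact $M_5$, compactness yields a uniform gap $s+4\ge\delta>0$; choosing any $a$ with $0<a<\delta/12$ then forces $a^{-1}(s+4)\ge a^{-1}\delta>12$ at every point, so the transformed curvature $a^{-1}(s+4)-4$ exceeds $8$ everywhere. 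Applying the $s>8$ result to the rescaled structure gives $\tilde B=0$ there, and since the rescaling neither changes $J$ nor mixes $\tilde B$ with $B_3$ (it only multiplies $B$ by $a^{1/2}$), the vanishing of $\tilde B$ descends to the original structure. Therefore $B=B_3$ is proportional to $d\gk$ and \eqref{perp_dk} collapses to $\iota_{\sreeb}F=F_H^+=d_AB=0$, as claimed.

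The step I expect to require the most care is not the positivity estimate but the verification that the D-homothety legitimately transports the conclusion: one must check that the decoupled equations really are invariant under \eqref{scaling-sym_I} with $B\to a^{1/2}B$ (in particular that the horizontal projection, the self-dual part $F_H^+$, and the quaternionic product $B\times B$ all transform compatibly), and that the orthogonal splitting $B=B_3+\tilde B$ relative to $J$ is respected, so that ``$\tilde B=0$'' is a geometric condition independent of the Sasaki scaling. Compactness is equally essential, since it is what converts the pointwise inequality $s>-4$ into the uniform gap needed to select a single rescaling parameter $a$.
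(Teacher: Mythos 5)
Your proposal is correct and follows essentially the same route as the paper: the Sasaki-refined energy functional together with \eqref{B12XB12} forces $\tilde B=0$ when $s>8$ pointwise, and the D-homothetic transformation \eqref{scaling-sym_I} (with $B\to a^{1/2}B$, under which $s\mapsto a^{-1}(s+4)-4$) upgrades the hypothesis to $s>-4$. Your explicit compactness step, extracting the uniform gap $s+4\ge\delta>0$ and choosing $0<a<\delta/12$, simply spells out what the paper leaves implicit, so nothing further is needed.
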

\begin{remark}
  In fact it is possible to modify the proof above slightly and get straight to the bound $s>-4$ without resorting to the symmetry \eqref{scaling-sym_I}. The main trick is to further split the term $\nabla \tilde B\cdotp\nabla \tilde B$ into mutually orthogonal components
  \bea \nabla \tilde B\cdotp\nabla \tilde B=(\nabla \tilde B)_H\cdotp(\nabla \tilde B)_H+12\tilde B*\tilde B,\nn\eea
  where $(~)_H$ means the horizontal component, and also on the rhs terms proportional to $L_{\sreeb}\tilde B=0$ are dropped since they must vanish by the HW equation.
  Combining this with \eqref{B12XB12} one modifies $s-8\to s-8+12=s+4$ which is now homogeneous under \eqref{scaling-sym_I}: $s+4\to a^{-1}(s+4)$ \footnote{We would like to thank the referee for pointing out the symmetry \eqref{scaling-sym_I} that prompted us to improve our proof and get a stronger vanishing result.}.
\end{remark}

Specifying now to the Sasaki-Einstein manifolds, whose partition function were the main subject of study in \cite{Qiu:2013aga,Qiu:2014oqa}, one has $Ric=4g$, $s=20>-4$ satisfying the bound, thus our vanishing theorem applies. Surely, we have performed the calculation assuming the Sasaki-Einstein metric, and each such metric usually comes with a fixed Reeb vector field (in the toric case, see \cite{Martelli:2005tp}). Take for instance $S^5$, the standard round Sasaki-Einstein metric requires the Reeb vector field to be the regular one, namely it is the fibre of the Hopf fibration $S^1\to S^5\to\BB{CP}^2$, and every Reeb orbit is closed. However, in many cases it is desirable to deform the Reeb vector field slightly while keeping the metric Sasaki. In this case the positive definiteness can also be argued if one recalls that positive definiteness is a condition stable under small perturbations. For general Sasaki-Einstein manifolds, even when the associated Reeb vector field is already irregular, one would still like the freedom to deform the Reeb vector field, so as to study the dependence of the partition function on the contact geometry, see \cite{Qiu:2014oqa}.

We conclude that for deformed Reeb vector field we have $\tilde B=0$ provided the Reeb vector field
is sufficiently close to the one corresponding to Sasaki-Einstein metric and the deformed metric is kept Sasaki.

One may wonder when  $B_3=0$ on a Sasaki manifold.
Due to the Sasaki structure one can split the differential $d^B$ into $d^B=\partial^B+\bar\partial^B$, with
\bea
\bar\partial^B:~~\Go_H^{p,q}\to\Go_H^{p,q+1}~;~~~\partial^B:~~\Go_H^{p,q}\to\Go_H^{p+1,q}~,\nn
\eea
just as in the K\"ahler manifold case. In fact, one has also the well-known relation between the Laplacian of $d^B$ and $\bar\partial^B$
\bea
\Gd^B=\{d^B,(d^B)^{\dag}\}~,~~~\Gd^B_{\bar\partial^B}=\{\bar\partial^B,(\bar\partial^B)^{\dag}\}~;~~~\Gd^B=2\Gd_{\bar\partial^B}^B=2\Gd_{\partial^B}^B~.\nn
\eea
The incorporation of the gauge connection does not change the conclusion.

Our $B$ field satisfies $d_AB=(d_A^{\dag}B)^H=0$, i.e. $B$ is  harmonic, then the last relation above shows that $B$ is also closed
 with respect to $\bar\partial^B$ or $\partial^B$, where the gauge connection is included but not written. Then a simple degree consideration shows that $d_AB_3=0$ all by itself. In particular, we write $B=J\,f$ for some adjoint scalar $f$, then
\bea
0=d_AB_3=d_A(Jf)=(dJ)f+Jd_Af=Jd_Af~.\nn
\eea
This implies that $d_Af=0$ since $J$ is horizontal and non-degenerate in the horizontal plane.

Now if we assume that the connection is \emph{irreducible}. Then $d_Af=0$ would imply that $f=0$ and hence $B_3=0$.

\section{Twisted 5D $N=2$ Yang-Mils theory}\label{s-5D-YM}
In this section we construct  the $N=2$ cohomological complex ($N=2$ twisted supersymmetry)
by combining the $N=1$ complex of vector- and hyper-multiplets. The localisation locus associated to the $N=2$ theory is naturally (\ref{HW_eqn}), and this is how we came to suggest our version of the covariantisation of the Haydys-Witten equation in the first place.

Let us outline the main steps: The $N=1$ cohomological complex for vector-multiplet is written in terms of differential forms on $M_5$ and
it requires $M_5$ to be K-contact.  The $N=1$ cohomological complex of the hyper-multiplet (\ref{susy_hyper_twist}) is written in terms of spinors, and requires the Sasaki-Einstein structure. One can pick a specific spin representation and rewrite the hyper-multiplet trasnformations in terms of differential forms. After this step, one finds that the cohomological complex for hyper-multiplet  is valid for any K-contact manifold. Next
  we take the cohomological complex of a vector-multiplet and hyper-multiplet in adjoint representation,
  and denote the supersymmetry $\gd_1$.
Now one tries to find a $U(1)$ symmetry that would mix the vector-complex with the hyper-complex, by taking the commutator of $\gd_1$ with this $U(1)$, one necessarily gets a new supersymmetry $\gd_2$. We will combine $\gd_{1,2}$ into a complex transformation $\delta$.

Let us summarise the main result of this section.
On any $K$-contact manifold $M_5$ we can define the following $N=2$ twisted supersymmetry
\bea
\gd_{\ep}A&=&\bar\ep\Psi+\ep\bar\Psi~,\nn\\
\gd_{\ep}\gs&=&-2\bar\ep\iota_{\sreeb}\bar\Psi~,\nn\\
\gd_{\ep}\bar\gs&=&-2\ep\iota_{\sreeb}\Psi~,\nn\\
\gd_{\ep}B&=& -\bar\ep\chi+\ep\bar\chi~,\nn\\
\gd_{\ep}H&=&-i{\cal L}_{\sreeb}^A(\bar\ep\chi+\ep\bar\chi)- \ep [\gs, \chi] -\bar\ep [\bar\gs, \bar\chi]- \iota_{\sreeb}[B,\bar\ep\Psi-\ep\bar\Psi]~,\label{final-N2complex}\\
\gd_{\ep}\tilde{\cal F}&=&\ep\big(i\iota_{\sreeb} d_A\bar\Psi-[\gs,\Psi^H\big])+\bar\ep(-i\iota_{\sreeb}d_A\Psi+[\bar\gs, \bar\Psi^H])~,\nn\\
\gd_{\ep}\Psi&=&\ep(2\tilde{\cal F}-2i\iota_{\sreeb}F+\gk [\gs, \bar\gs])+2i\bar\ep d_A\bar\gs~,\nn\\
\gd_{\ep}\bar\Psi&=&\bar\ep(-2\tilde{\cal F}-2i\iota_{\sreeb}F -\gk [\gs, \bar\gs])+2i\ep d_A\gs~,\nn\\
\gd_{\ep}\chi&=&2\ep(H- i{\cal L}_{\sreeb}^AB)+\bar\ep [\bar\gs, B]~,\nn\\
\gd_{\ep}\bar\chi&=&2\bar\ep(H+i {\cal L}_{\sreeb}^AB)-\ep [\gs, B]~.\nn
\eea
Here $\ep=1/2(\ep_1+i\ep_2)$, $\bar\ep=1/2(\ep_1-i\ep_2)$ are the parameters for the susy transformation, $A$ is the gauge connection, $F$ is its field strength, $\sigma$ is a complex scalar in the adjoint representation. Furthermore
 $B,~H\in\Omega_H^{2+}(M_5)$ are real adjoint, while $\tilde{\cal F}$ is a real adjoint horizontal 1-form. The fermionic fields are: $\Psi$, a complex 1-form in the adjoint; and $\chi$, a complex form belonging to $\Omega_H^{2+}(M_5)$ in the adjoint. The superscript $H$ on $\Psi^H$ means
    the horizontal component and the pairing $\bra\Psi,B\ket$ is defined as $\bra\Psi,B\ket_l = [\Psi_j, B_{kl}] g^{jk}$.  The $U(1)$ charge is
     allocated as follows: $A$, $B$, $H$ and ${\cal F}$ have charge 0, while $\ep$, $\Psi$ and $\chi$ (resp. $\bar\ep$, $\bar\Psi$ and $\bar\chi$) have charge $+1$ (resp. $-1$),
      and finally $\bar{\gs}$ has charge $+2$ ($\gs$ has charge $-2$).
       The supersymmetry (\ref{final-N2complex}) satisfies the following $N=2$ algebra:
       \bea
        \{\delta_{\ep},\gd_{\eta}\}=-4\eta\ep G_{\bar\gs}-4\bar\eta\bar\ep G_{\gs}-4i(\eta\bar\ep+\ep\bar\eta) (\L_{\sreeb}-iG_{\iota_{\sreeb}A})~,\nn\eea
where $G_{\phi}$ is the gauge transformation with parameter $\phi$ defined as
\bea
&&G_{\phi}A=-id_A \phi~,\nn\\
&&G_{\phi}-=[\phi,-]~,\label{gauge_transformation}
\eea
where the second line is for all other field in the adjoint. Next we come to the derivation of this result.

\subsection{The $N=1$ complex}
The $N=1$ cohomological complex for the vector-multiplet was written down in \cite{Kallen:2012cs} for any K-contact manifold
\bea
&&\begin{array}{ll}
  \delta A = \Psi~, & \gd \Psi = -i\iota_{\sreeb}F+id_A \gs~, \\
  \gd \chi = H~, & \gd H=-i{\cal L}^A_{\sreeb}\chi-[\gs,\chi]~, \\
  \gd \gs =-\iota_{\sreeb}\Psi~, &
\end{array}\label{susy_vect_twist}
\eea
 and its relation to the $N=1$ supersymmetry has been discussed in details in \cite{Kallen:2012va}.
In (\ref{susy_vect_twist}) $A$ is the gauge field, $d_A=d-i[A, ~]$ is the gauge covariant derivative. All other fields are in the adjoint; they include: $\gs$  a \emph{real} scalar, $\Psi$ an odd \emph{real} 1-form, $\chi$ (resp. $H$) an odd (resp. even) \emph{real} horizontal self-dual 2-form, i.e. $\iota_{\sreeb}H=0$, ${\star}_{\sreeb}H=\iota_{\sreeb}{\star}H=H$.
The closure of $\gd$ reads
\bea
 \gd^2=-i{\cal L}_{\sreeb}-G_{\gs+\iota_{\sreeb}A}~,\label{closure_vector}
\eea
where $G_{\phi}$ is the gauge transformation defined in (\ref{gauge_transformation}).

The $N=1$ cohomological complex of the hyper-multiplet is written in terms of spinors and thus for the moment we assume the Sasaki-Einstein condition. Later we will show how to relax this restriction on the geometry.
We define a projector $P_{\pm}=\frac{1}{2}(1\pm \gamma_5)$ with $\gc_5=-\reeb\cdotp\Gc$, namely the nowhere vanishing $\reeb$ gives us an operator $\gc_5$ that
can be used to split the spin bundle according to the chirality.  The horizontal complex structure
$J$ is related to the Reeb as (indices are raised and lowered with the metric implicitly)
\bea
 J_{pq}=-\frac12\nabla_{[p}\reeb_{q]}~.\nn
\eea
By definition $J$ is horizontal and self-dual. The complex for the hyper-multiplet with fields $(q, \psi, {\cal F})$
in a general representation of the Lie algebra reads as follows \cite{Kallen:2012va}
\bea
&& \delta q = iP_+\psi~,\nn\\
&& \delta \psi =-\frac{1}{4r}J_{pq}(\Gc^{pq}q)+(\slashed{D}+i\gs) q+{\cal F}~,\label{susy_hyper_twist}\\
&& \delta {\cal F} = -iP_-\slashed{D}\psi-\gs P_-\psi+i\Psi^m(\Gc_m+\reeb_m)q~,\nn\eea
where $q$ is an even spinor with $P_-q=0$, and ${\cal F}$ is an auxiliary even spinor with $P_+{\cal F}=0$.
The odd $\psi$ has both chiralities under $\gc_5$ and we write $\psi_{\pm}$ for $P_{\pm}\psi$.
Here  $\slashed{D}$ is the Dirac operator.

In the following, we need to combine the two complexes (\ref{susy_vect_twist}) and (\ref{susy_hyper_twist}).
  For this we will need to make a change of variable for the hyper-complex  and to rewrite it terms of differential forms.
 The following discussion will be technical and the reader may consult the final answer the  table \ref{susy_combo} for
  supersymmetry transformations.

We will make use of one of the two Killing spinors satisfying the Killing equation (the other Killing spinor will have the opposite sign in the equation below)
\bea
D_m\xi=-\frac{i}{2}\Gc_m\xi~,\nn
\eea
which satisfies
\bea
 &&P_-\xi=0~,~~~\big((JX)\cdotp\Gc-\frac{i}{2}(\slashed{X}+\slashed{\reeb}\slashed{ X})\big)\xi=0~,\label{subbundle}\\
&&\hspace{2cm}\slashed{J}\xi=-4i\xi~. \label{spin_form_degree}\eea
 The convention of the gamma matrices are collected at the end of the appendix.
By using $\xi$, all the other spinors in the hyper-complex can be reduced into horizontal $(0,i)$-forms. Another way of saying this is that we use the horizontal $(0,i)$-forms to build a spin representation. With this representation $J$ has the properties
\bea
 \slashed{J}=-4i(1-\deg)~,~~~J_{\bar i}^{~\bar j}=i\gd^{\bar j}_{\bar i}~,\nn
 \eea
and $\xi$ can be thought of as the (0,0)-form: $\xi\sim 1$.

As an example of this rewriting
\bea
&&q\To \slashed{B}\xi+f\xi~,~~~\psi_+\To \slashed{\Gs}\xi+\gl\xi~,\nn\\
&&\psi_-\To \psi_{-p}\Gc^p\xi~,~~~{\cal F}\To {\cal F}_p\Gc^p\xi~,\label{spinor_diff_form}
\eea
where $B$ and $\Gs$ are horizontal (0,2)-forms and $f,\,\gl$ are 0-forms. In the second line we used the same symbols $\psi_-,\,{\cal F}$ for the spinors $\psi_-,\,{\cal F}$ and the (0,1)-forms they reduce to. The supersymmetry rule reads simply $\gd B=i\Gs$ and $\gd f=i\gl$.
To obtain $\gd\Gs$ or $\gd\gl$, we need to rewrite $\gd\psi$ as
\bea &&\gd\psi
=(-{\cal L}^A_{\sreeb}B-\frac{3i}{2}B+i\gs B)_{pq}\Gc^{pq}\xi\nn\\
&&\hspace{4cm}+(-4d_A^{\dag}B+(d_A f)^{(0,1)})_p\Gc^p\xi
+(-{\cal L}^A_{\sreeb}f-\frac{3i}{2}f+i\gs f)\xi+{\cal F}~,\nn\eea
where $(\,)^{0,1}$ stands for the horizontal (0,1)-component. From these, we get the expected results
\bea
&&\gd \Gs=-{\cal L}^A_{\sreeb}B-\frac{3i}{2}B+i\gs B~, \label{psi_Sigma}\\
&&\gd \gl=-{\cal L}^A_{\sreeb}f-\frac{3i}{2}f+i\gs f~.\label{psi_lambda}
\eea
We can do likewise for ${\cal F}$ and get
\bea \gd{\cal F}=-i{\cal L}_{\sreeb}^A\psi_-+\frac{3}{2}\psi_--((d_A\gl)^{0,1}\cdotp\Gc)\xi+4i(d_A^{\dag}\Gs\cdotp\Gc)\xi-\gs \psi_-+4i\langle \Psi, B\rangle+i\Psi^{0,1}f~,\nn
\eea
 where $\langle \Psi, B\rangle_i = \Psi_l B_{ki} g^{lk}$ and  with Lie algebra indices it is understood as matrix multiplication
  (later for the adjoint hyper it will be simply the commutator).
To summarise we have the supersymmetry which closes as $\gd^2=-i\L_{\sreeb}-G_{\gs+\iota_{\sreeb}A}+3/2$. The extra shift of 3/2 compared to the vector case (\ref{closure_vector}) has some deep geometrical and physical meanings \cite{Schmude:2014lfa}, and it is the reason that in the presence of $N=1$ supersymmetry, an adjoint hyper will not cancel completely the vector contribution. But as we are aiming for an $N=2$ complex, we need to modify the supersymmetry
  rules for the hyper-complex to make the closure property identical for the vector- and hyper-complex.
  In the end the modified hyper-complex reads
\begin{equation}\label{susy_hyper_twisted}
\begin{split}
&\gd f=i\gl,~~~~\gd B=i\Gs~,\\
&\gd{\cal F}=-i\big({\cal L}_{\sreeb}^A\psi_-+(d_A \gl)^{0,1}-4d_A^{\dag}\Gs\big)-\gs \psi_-+4i \langle \Psi,
  B \rangle +i\Psi^{0,1}f~,\\
&\gd \Gs=- {\cal L}^A_{\sreeb}B+i\gs B~,~~~~\gd \gl=-{\cal L}^A_{\sreeb}f+i\gs f~,\\
&\gd\psi_-=-4d_A^{\dag}B+(d_Af)^{(0,1)}+{\cal F}~.
\end{split}
\end{equation}
where the superscript $(0,1)$ means to take the horizontal and $(0,1)$ component of the 1-form in question.

 From now on we assume that all fields from hyper-multiplet are in adjoint representation and we write
  the commutators explicitly in our formulas.
To combine the above with (\ref{susy_vect_twist}), we need to split the fields in (\ref{susy_hyper_twisted}) into real and imaginary parts. Let us sketch the fate of the various fields.
First, the real and imaginary parts of $B$ are related by $J$, and the real part gives us two out of three horizontal self-dual 2-forms, whereas the third one, which is propositional to $J$, will be given by $\im f J$. The fields $\re B$ and $\im fJ$ altogether will again be called $B$. At the same time $\re f$ will become the imaginary part of $\gs$ in the vector-complex. Then it is appropriate to combine $J\re\gl,\im\Gs$ with $\chi$ into a complex horizonal self-dual 2-form.
Second, from the 1-form $\psi_-$ (automatically (0,1)), it suffices to take its imaginary part and combine it with $\Psi$ from the vector part to form a complex odd 1-form. But as $\psi_-$ is always horizontal, the missing vertical component will be supplied by $\im \gl$. As for ${\cal F}$, we will continue to denote by ${\cal F}$ its real part, and as always its imaginary part is not independent. To summarise, we have now four real fields $A,\,H$ and ${\cal F},\,B$ and a number of complex fields $\Psi,\chi,\gs$. In the following table of susy rules, we start to denote $\gd$ as $\gd_1$ to differentiate it with the new susy that we will get shortly. For the complex fields, we use the subscript ${}_{1,2}$ denote their real and imaginary parts
\bea
&&\gd_1A=\Psi_1~,\nn\\
&&\gd_1 B=-\chi_2,~~~\gd_1H=-i{\cal L}_{\sreeb}^A\chi_1 -[\gs_1, \chi_1]~,\nn\\
&&\gd_1{\cal F}=\iota_{\sreeb}d_A\Psi_2-4(d_A^{\dag}\chi_2)^H-i\gs_1\Psi_2^H+4i \langle \Psi_1, B \rangle +i[\Psi_1^H,\gs_2]~,\nn\\
&&\gd_1\gs_1=-\iota_{\sreeb}\Psi_1~,~~~\gd_1\gs_2=\iota_{\sreeb}\Psi_2~,\nn\\
&&\gd_1\Psi_1=-i\iota_{\sreeb}F+id_A\gs_1~,~~\gd_1\Psi_2=4i(d_A^{\dag} B)^H-id_A\gs_2-i{\cal F}-\gk[\gs_1,\gs_2]~,\nn\\
&&\gd_1\chi_1=H~,~~~\gd_1\chi_2=i{\cal L}^A_{\sreeb} B+[\gs_1 ,B]~,\label{susy_one}
\eea
where $(\,)^H$ stands for the horizontal component.

\subsection{A $U(1)$ symmetry and the $N=2$ complex}

To obtain the second supersymmetry consider the assignment of R-charges
\bea \begin{array}{c|c|c|c}
        R & 0 & 1 & 2 \\
        \hline
        \textrm{fields} & A,B,H,{\cal F} & \Psi,\chi &\bar\gs\\
       \end{array}~,\label{list_R_charge}\eea
we will use $\rho$ to denote the infinitesimal generator of the $R$-rotation. It acts as
\bea
\rho\chi=i\chi~;~~~\rho\Psi=i\Psi~;~~~\rho\gs=-2i\gs \label{def_rho}
\eea
and zero on all other fields.

This $R$-charge can be traced back to the 10D $N=1$ super Yang-Mills theory. If we split the 10D into 0, (1,2,3,4,5) and (6,7,8,9), then the field $\gs_1$ is the $0^{th}$ component of the 10D gauge field. The 10D gaugino is a Majorana-Weyl spinor, which has 10D chirality +. The 10 D chirality operator can be written as the product of the 6D and 4D chirality operators, those components with plus chirality in both the first six and last four space-time directions become the 5D gaugini, while those with both minus chiralities give rise to the fermion in the hyper-multiplet.
The rotations in 6-7, 8-9 serve as the 5D $N=1$ R-symmetry since it does not mix up the 5D vector- and hyper-multiplets. But the above $R$-charge assignment corresponds to a rotation that would mix the $0^{th}$ and $6^{th}-9^{th}$ space-time directions. Since such a rotation does not commute with the 5D $N=1$ susy, their commutator will necessarily give us a new susy.

But things are trickier than this. To apply localisation, one of course needs a complex that closes off-shell. The 5D $N=1$ vector and hyper-complex (\ref{susy_vect_twist}), (\ref{susy_hyper_twist}) do close off-shell, but the vector part does not rotate covariantly with respect to $\rho$, and so one cannot apply the above logic immediately. One can instead try to start from the 10D $N=1$ formulation, which does reduce to 5D $N=2$ except that the supersymmetry does not close off-shell. But a partial solution to get off-shell closure is provided by Berkovits \cite{Berkovits:1993hx}, who gave a formulation of the 10D $N=1$ supersymmetric Yang-Mills theory with 9 off-shell super charges, which is sufficient for our purpose. Berkovits added seven auxiliary fields $G_{1,\cdots,7}$ and the 10D $N=1$ susy is modified,
\bea
&& \gd A_m=i\ep\Gc_m\gl~,\nn\\
&&\gd\gl=\frac{1}2\slashed{F}\ep+\sum_{i=1}^7\nu_jG_j~,\nn\\
&&\gd G_i=-i\nu_i\slashed{D}\gl~,\nn
\eea
where $A$ is the gauge field, $\gl$ is the Majorana-Weyl gaugino and $\ep$ and $\nu_i$ are Majorana-Weyl spinors satisfying
\bea \nu_j\Gc_m\nu_k-\gd_{jk}\ep\Gc_m\ep =0= \nu_j\Gc_m\ep~.\nn\eea
If one focuses on $\gd\gl$, and traces the susy rule down to 5D carefully, one realises that (\ref{susy_one}) needs to be modified, and the modifications are in red in the left column of the following table
\bea
\begin{array}{c||c|c}
    \textrm{field} & \gd_1 & \gd_2 \\
    \hline
        A & \Psi_1 & \Psi_2\\
    \gs_1 & -\iota_{\sreeb}\Psi_1 & \iota_{\sreeb}\Psi_2\\
    \gs_2 & \iota_{\sreeb}\Psi_2 & \iota_{\sreeb}\Psi_1\\
        B & -\chi_2 & \chi_1\nn\\
        H & -i{\cal L}_{\sreeb}^A\chi_1- [\gs_1, \chi_1]-{\color{red}\gd_1 [\gs_2, B]} & -i{\cal L}_{\sreeb}^A\chi_2 + [\gs_1, \chi_2]+{\color{red}\gd_2 [\gs_2, B]} \\
  \chi_1  & H+{\color{red} [\gs_2, B]} & -i{\cal L}^A_{\sreeb}B+ [\gs_1,  B] \\
  \chi_2  & i{\cal L}^A_{\sreeb} B+ [\gs_1, B] & H-{\color{red} [\gs_2, B]} \\
  \Psi_1  & -i\iota_{\sreeb}F+id_A \gs_1 & -4i(d_A^{\dag} B)^H-id_A \gs_2+i{\cal F}+\gk[\gs_1,\gs_2] \\
  \Psi_2  & 4i(d_A^{\dag} B)^H-id_A \gs_2-i{\cal F}-\gk[\gs_1,\gs_2] & -i\iota_{\sreeb}F-id_A \gs_1 \\
 {\cal F} & \iota_{\sreeb}d_A\Psi_2-4(d_A^{\dag}\chi_2)^H-i [\gs_1,\Psi_2^H] & -\iota_{\sreeb}d_A\Psi_1+4(d_A^{\dag}\chi_1)^H-i [\gs_1, \Psi_1^H]\\
          & +4i\langle \Psi_1, B \rangle -i [\gs_2, \Psi_1^H] & +4i\langle \Psi_2, B\rangle +i [\gs_2,\Psi_2^H]\end{array},\nn\\
          \label{susy_combo}
\eea
Now to obtain the second susy, one can apply the above strategy and consider the commutator
\bea
\gd_2=[\rho,\gd_1]~,\label{new_susy}
\eea
where $\rho$ is defined in (\ref{def_rho}). The result is given in the second column of table
(\ref{susy_combo}).
Note that $\gd_2$ can also be written as
\bea
\gd_2=e^{\frac{\pi}{2}\rho}\gd_1 e^{-\frac{\pi}{2}\rho}~.\label{new_susy_I}\eea
From the closure property of $\gd_1$ and (\ref{new_susy_I}), it is easy to obtain the closure of $\gd_2$
\bea &&\gd_1^2=-i{\cal L}_{\sreeb}-G_{\gs_1+\iota_{\sreeb}A}~,\nn\\
&&\gd_2^2=-i{\cal L}_{\sreeb}-G_{-\gs_1+\iota_{\sreeb}A}~.\label{closure_all}\eea

To make the R-symmetry explicit, we define
\bea
&& \ep=\frac12(\ep_1+i\ep_2)~,~~~\bar\ep=\frac12(\ep_1-i\ep_2)~,\nn\eea
and also recall that $\Psi=\Psi_1+i\Psi_2$, $\chi=\chi_1+i\chi_2$ and $\gs=\gs_1+i\gs_2$, with their R-charges listed in (\ref{list_R_charge}). Define the total susy transformation as $\gd_{\ep}=\ep_1\gd_1+\ep_2\gd_2$
then the susy rule reads
\bea
 \gd_{\ep}A&=&\bar\ep\Psi+\ep\bar\Psi~,\nn\\
\gd_{\ep}\gs&=&-2\bar\ep\iota_{\sreeb}\bar\Psi~,\nn\\
\gd_{\ep}\bar\gs&=&-2\ep\iota_{\sreeb}\Psi~,\nn\\
\gd_{\ep}B&=&-i(\ep\bar\chi-\bar\ep\chi)~,\nn\\
\gd_{\ep}H&=&-i\L_{\sreeb}^A(\bar\ep\chi+\ep\bar\chi)- \ep [\gs,\chi]-\bar\ep [\bar\gs,\bar\chi]-i\iota_{\sreeb}[B,\bar\ep\Psi-\ep\bar\Psi]~,\label{susyN2-first}\\
\gd_{\ep}{\cal F}&=&\ep\big(i\iota_{\sreeb}d_A\bar\Psi-4i(d_A^{\dag}\bar\chi)^H+4i\bra\bar\Psi,B\ket-[\gs,\Psi^H]\big)\nn\\
&&+\bar\ep(-i\iota_{\sreeb}d_A\Psi+4i(d_A^{\dag}\chi)^H+4i\bra\Psi,B\ket+[\bar\gs, \bar\Psi^H])~,\nn\\
\gd_{\ep}\Psi&=&\ep(2{\cal F}-2i\iota_{\sreeb}F-8(d_A^{\dag}B)^H+\gk [\gs,\bar\gs])+2i\bar\ep d_A\bar\gs~,\nn\\
\gd_{\ep}\bar\Psi&=&\bar\ep(-2{\cal F}-2i\iota_{\sreeb}F+8(d_A^{\dag}B)^H-\gk [\gs, \bar\gs])+2i\ep d_A \gs~,\nn\\
\gd_{\ep}\chi&=&2\ep(H-{\cal L}_{\sreeb}^AB)+2i\bar\ep [\bar\gs, B]~,\nn\\
\gd_{\ep}\bar\chi&=&2\bar\ep(H+{\cal L}_{\sreeb}^AB)-2i\ep [\gs, B]~.\nn\eea
One can now obtain the closure property of $Q=\delta_1 + i\delta_2$ and $\bar Q=\delta_1 - i \delta_2$, for this one needs to use either (\ref{new_susy}) or (\ref{new_susy_I}). For example
\bea
\frac12\{Q,\bar Q\}=\frac12\{\gd_1+i\gd_2,\gd_1-i\gd_2\}=\gd_1^2+\gd_2^2=\gd_1^2+e^{\frac{\pi}{2}\rho}\gd^2_1e^{-\frac{\pi}{2}\rho}=-2i({\cal L}_{\sreeb}-iG_{\iota_{\sreeb}A})~,\nn
\eea
while for $Q^2$
\bea Q^2&=&(\gd_1+i\gd_2)^2=\gd_1^2-\gd_2^2+i\{\gd_1,\gd_2\}=\gd_1^2-e^{\frac{\pi}{2}\rho}\gd_1^2e^{-\frac{\pi}{2}\rho}+i\{\gd_1,[\rho,\gd_1]\}\nn\\
&=&-2G_{\gs_1}+2iG_{\gs_2}=-2G_{\bar\gs}\label{closure_Q}
\eea
and similarly $\bar Q^2=-2 G_{\gs}$, where $G_{\gs}$ stands for the infinitesimal gauge transformation with the parameter
 $\gs$, defined in (\ref{gauge_transformation}).

\subsection{The $Q$-exact Action}

 The original supersymmetric action of the vector- and hyper-multiplet written in \cite{Hosomichi:2012ek} does not respect the $U(1)$ symmetry discussed here, and therefore it will not be usable for any calculation of the $N=2$ complex.
  Instead, our action will be $Q$- and $\bar Q$-exact. The purpose of spelling out the detail is to figure out which fields need to be Wick rotated, so as to correctly recover the Haydys-Witten equation.

We will attempt the following three $Q$ exact terms
\bea
S_1 &=&-\frac14\bar Q\big(\Psi,-{\cal F}-i\iota_{\sreeb}F-4(d_A^{\dag}B)^H-\frac12\gk[\gs,\bar\gs]\big) \nn\\
&&-\frac14Q\big(\bar\Psi,{\cal F}-i\iota_{\sreeb}F+4(d_A^{\dag}B)^H+\frac12\gk[\gs,\bar\gs]\big)~,\nn\\
S_2&=&\frac12\bar Q\big(\chi,H+{\cal L}_{\sreeb}^AB+\tilde F\big)+\frac12Q\big(\bar\chi,H-{\cal L}_{\sreeb}^AB+\tilde F\big)~,\nn\\
S_3&=&-\frac18Q(\Psi,\bar Q\bar\Psi)-\frac18\bar Q(\bar\Psi,Q\Psi)~,\nn\eea
where
\bea \tilde F=2i(F_H^{2+}+\frac14B\times B)~,\nn\eea
but there is clearly some freedom in the choice of $\tilde F$, which might hint at the possibility of deforming the equations. But we leave this for future work.

If we only focus on the bosonic term  we have the following expressions
\bea
&&S_1|_{bos}=||{\cal F}+\frac12\gk[\gs, \bar\gs] || ^2+||\iota_{\sreeb}F||^2-16||(d_A^{\dag}B)^H||^2~,\nn\\
&&S_2|_{bos}=2||H||^2-2||{\cal L}_{\sreeb}^AB||^2+2(H,\tilde F)~,\nn\\
&&S_3|_{bos}=(d_A\bar\gs, d_A \gs )~.\nn\eea
The total action is chosen to be the combination
\bea S=S_1+S_2+S_3\nn\eea
and next we integrate out the auxiliary fields $H$
\bea
S&=&||\iota_{\sreeb}F||^2+2||F_H^{+}+\frac{1}{4}B\times B||^2+ (d_A\bar\gs, d_A\gs)-16||(d_A^{\dag}B)^H ||^2\nn\\
&&-2||{\cal L}_{\sreeb}^AB||^2 + \frac14|| [\gs, \bar{\gs}]||^2 + ||{\cal F}||^2~,\nn\eea
where some mixed terms automatically disappear due to the orthogonality.
Observing that the kinetic terms involving $B$ are of the wrong sign, we Wick rotate $B\to -iB$
\bea
&&S=||\iota_{\sreeb}F||^2+2||F_H^{+}-\frac{1}{4}B\times B||^2\nn\\
&&\hspace{2cm}+(d_A\bar\gs, d_A\gs)
+16||(d_A^{\dag}B)^H||^2+
2||{\cal L}_{\sreeb}^AB||^2 +\frac14 || [\gs, \bar{\gs}]||^2~.\nn\eea
Now that the action is positive definite, then the stationary point corresponds to exactly the set of equations (\ref{HW_eqn})
and in addition $d_A \gs=0$, $[\gs, \bar{\gs}]=0$.

The final $N=2$ complex (\ref{final-N2complex}) recorded in the beginning of the section has undergone the Wick rotation $B\to -iB$, as well as the redefinition $\tilde{\cal F}={\cal F}-4(d_A^{\dag}B)^H$ which brings about some simplification to (\ref{susyN2-first}).

\section{Summary}\label{s-summary}

The present work contains two results: the vanishing theorems for the Haydys-Witten equations  and we showed that upon
 certain constraints on the geometry they will collapse to the contact instanton equations. The second result is that we have
  constructed the off-shell $N=2$ twisted supersymetry transformations (\ref{final-N2complex}) for any K-contact manifold (Sasaki and Sasaki-Einstein manifolds are the special cases of such).

The next important question is what does the present $N=2$ theory actually calculate. At the moment we can only give a rough prognosis
 and a proper analysis is left for future investigation. If we consider the case of product manifold $M_5 = M_4 \times S^1$
  then $N=1$ 5D theory will calculate the roof $A$ genus on the moduli space of instantons on $M_4$, since we are dealing effectively with the $N=1$ quantum mechanics on the moduli space of instantons,  see \cite{Baulieu:1997nj}. The $N=2$ theory on $M_5 = M_4 \times S^1$ should correspond to $N=2$ quantum mechanics on the instanton moduli space and thus the result will just produce
   the Euler number of the moduli space,  for these statements about the supersymmetric quantum mechanics and index theorem, see the work of Alvarez-Gaum\'e \cite{Alvarez-Gaume_83_JOP}, \cite{alvarez-gaume1983}.   In particular, if the moduli space contains discrete points, then the Euler number simply counts the number of those points \emph{without sign}, which is of fundamental importance in the application of Haydys-Witten equations in \cite{Witten_FK}.  From point of view of 6D (2,0) theory  we calculate the partition function on
    $M_4 \times T^2$ and this should give rise to the N=4 Vafa-Witten theory on $M_4$ \cite{Witten_FK}.

  Thus it is plausible that when $M_5$ is not a product manifold, we still
    count the solutions for the Haydys-Witten equations. The explicit calculation around the trivial solution $A=0$ gives a contribution
      equal to
     1 (the contribution of the vector-multiplet cancels exactly that of the hyper-multiplet) and it seems
      the same will be true around any other isolated solution of the Haydys-Witten equations simply because the two complexes are isomorphic modulo some technical issues coming from the ghost sector. This would then prove our statement about the counting of solutions, however we have not checked it explicitly. Also at the moment we are not able to produce a coherent understanding of the partition function for
a generic K-contact $M_5$.

\bigskip
{\bf Acknowledgements} We thank  Edward Witten for the initial suggestion toward the relation between contact instantons and
the  Haydys-Witten equations. We thank Sergey Cherkis, Andriy Haydys and Vasily Pestun for helpful discussions.
 We thank two anonymous referees for valuable comments on the manuscript, especially that on the scaling symmetry (\ref{scaling-sym}) of the Haydys-Witten equation.
The research of J.Q. is supported in part by the Luxembourg FNR grant PDR 2011-2, and by the UL grant GeoAlgPhys 2011-2013, and in part by the Max-Planck Institute for Mathematics.
The research of M.Z. is supported in part by Vetenskapsr\r{a}det under grant $\sharp$ 2011-5079.

\appendix
\section{Basics of contact geometry}\label{A-contact}

In this appendix we collect some definitions and facts of the contact geometry.

A manifold $M$ of dimension $2n+1$  is called {\em contact manifold} if it possesses a 1-form $\gk$ (the contact 1-form) such that
\bea
 \gk\wedge(d\gk)^n\neq0\nn~.
 \eea
 The subbundle $\xi\subset TM$ defined by $\xi=\ker \gk$ is called the transverse or horizontal plane.  The data $(M, \xi)$ is called {\em contact structure} on $M$. For a fixed contact form $\gk$  there exists
    a unique vector field $\reeb$ such that $\iota_{\sreeb}\gk=1$, $\iota_{\sreeb}d\gk=0$ which is called {\em the Reeb vector field} (we used small font $\reeb$ to avoid confusion with the curvatures).
  The Reeb foliation is a foliation whose leaves are the Reeb flow, while one needs to pay attention that $\xi$ is not an integrable distribution.

 On a contact manifold $M$ for a fixed contact form $\gk$ one can always choose a metric $g$ \emph{compatible with contact structure} in the following way.
 On the transverse plane $\xi$, there exists a compatible complex structure $J$ in the sense that $d\gk(J-,-)$ is positive definite on $\xi$ and $d\gk(J-,J-)=d\gk(-,-)$.
  The construction is similar to the symplectic case, since $d\gk$ serves as a symplectic structure on $\xi$, see  \cite{MR2682326} for more details.
    Extending $J$ to act as zero on $\reeb$ we can regard $J$ as an endomorphism of $TM$, and it satisfies $J^2=-1+\reeb\otimes \gk$. Thus one can write down a compatible metric
\bea
g=\frac{1}{2}d\gk(J-,-)+\gk\otimes \gk~.\label{can_metric}
\eea
We will lower or raise indices on $J$ without mentioning, so sometimes $J$ will be regarded as a 2-form, and in fact $d\gk=-2J$.
 There are the following important properties of the compatible metric:
 \bea
   \iota_{\sreeb} (\star \omega_p) = (-1)^p \star (\gk \wedge \omega_p)~,\label{used_later}
 \eea
  where $\omega_p$ is $p$-form and $\star$ is the Hodge star operation with respect to $g$ and
 \bea
  {\rm vol}_g = \frac{(-1)^n}{2^n n!} \gk \wedge (d\gk)^n~,
 \eea
 where ${\rm vol}_g$ is the volume form associated with $g$.
 We will refer to $(M, \gk, \reeb, g)$ as
{\em contact metric structure} (be aware that some authors use a different terminology).
For other equivalent definitions and proofs the reader may consult the book \cite{MR2682326}.

We refer to contact metric structure $(M, \gk, \reeb, g)$ as {\em K-contact structure} iff $\reeb$ is Killing with respect to $g$, i.e. ${\cal L}_{\sreeb} g=0$.

Consider  a contact metric structure $(M, \gk, \reeb, g)$ then we can construct  the metric cone $C(M)=M\times \BB{R}^{>0}$, with the metric
\bea
G=r^2g+dr\otimes dr~,\label{cone_metric}
\eea
where $r$ is the coordinate on the $\BB{R}^{>0}$ factor. The endomorphism $J$ has a natural extension ${\cal J}$ to $C(M)$ defined as
\bea
{\cal J}\reeb=-r\partial_r~,~~~{\cal J}r\partial_r=\reeb~.\label{cone-reebJ}
\eea
 Thus the cone $C(M)$ admits the symplectic structure $d (r^2 \gk)$ with a compatible almost complex structure ${\cal J}$.
We say that  $M$ admits a {\em Sasaki structure} if $C(M)$ is a K\"ahler manifold (i.e., the metric $G$ is a K\"ahler metric).
 The covariant constancy of ${\cal J}$ translates to the conditions on $J$
\bea
\bra
Z,(\nabla_XJ)Y\ket=-\gk(Z)\bra X,Y\ket+\bra Z,X\ket\gk(Y)~,\label{integrable_J}
\eea
where $X,Y,Z\in TM$ and $\bra-,-\ket$ is the paring using the metric.
The relation (\ref{integrable_J}) is useful when one needs to decompose the Riemann tensor. For example the following (where $R_{XY}=[\nabla_X,\nabla_Y]-\nabla_{[X,Y]}$)
\bea
\bra U, R_{XY}V\ket-\bra JU, R_{XY}JV\ket=\bra X,U\ket\bra Y,V\ket-\bra X,JU\ket\bra Y,JV\ket-(X\leftrightarrow Y)\label{Riemann_20}
\eea
restricts the (0,2) and (2,0) components  of the Riemann tensor. We will also need the formula
\bea R_{XY}\star J=\textrm{vol}_g(\bra X,c_1Y\ket-(2n-1)\bra X,JY\ket)\label{Riemann_11}\eea
that can be derived from (\ref{Riemann_20}) and the Biancchi identity. Here
$(c_1)_{mn}=R_{mp}J^p_{~n}$ is automatically horizontal and antisymmetric and (1,1) w.r.t $J$.

The Weyl tensor is defined as
\bea W_{ijkl}=R_{ijkl}-\frac{s}{4n(2n+1)}g\bar\wedge g-\frac{1}{2n-1}\left (Ric-\frac{s}{2n+1}g \right )\bar\wedge g~,\label{Weyl_tensor}\eea
where
 $Ric$ is the Ricci tensor $Ric_{ij}=R_{ki~j}^{~~k}$ and $s$ is the Ricci scalar $s=Ric_i^{~i}$. The symbol $\bar\wedge$ is defined as
\bea (A\bar\wedge B)_{ijkl}=A_{ik}B_{jl}-A_{jk}B_{il}-A_{il}B_{jk}+A_{jl}B_{ik}\nn\eea
for two symmetric tensors $A,B$.

Lastly The manifold $M$ is said to be {\em Sasaki-Einstein} if the cone $C(M)$ is Calabi-Yau.

Now let us specialize to five dimensional metric contact manifolds.
By using $\reeb$ and $\gk$ one can decompose any form into its vertical and horizontal components
\be
 \Omega^p (M_5) &=& \Omega^p_V (M_5) \oplus \Omega^p_H (M_5)~,\\
 \ga&=&\gk\iota_{\sreeb}\ga+\iota_{\sreeb}(\gk\wedge\ga)~.
\ee
The space $\Omega^p_V (M_5)$ is orthogonal to $\Omega^p_H (M_5)$ with respect to the scalar product, which is defined using the
 compatible metric $g$
\be
  (\alpha, \beta) =  \int\limits_{M_5} \alpha \wedge {\star} \beta~.
\ee
For the horizontal component one has also the notion of duality, by using the Hodge star
\bea {\star}_{\sreeb}\ga=\iota_{\sreeb}{\star}\ga=(-1)^{\deg\ga}{\star}(\gk\wedge\ga)~.\nn\eea
Then the space $\Omega_H^2(M)$ can be decomposed into
 \be
  \Omega_H^2(M_5) =  \Omega_H^{2+} (M_5) \oplus  \Omega_H^{2-} (M_5)
 \ee
  with the definition
  \be
 \iota_{\sreeb} {\star} \omega_H^\pm = \pm \omega_H^\pm~.\label{hor_self_dual}
\ee
The spaces $\Omega_H^{2\pm} (M)$ are orthogonal to each other under the scalar product too.

On a K-contact manifold, one can develop also the Hodge theory for basic differential forms associated to the Reeb foliation. The basic forms $\Go_B(M)$ are defined as
\bea
\Go_B(M)=\{\ga\in\Go(M)|\iota_{\sreeb}\ga=0=\L_{\sreeb}\ga\}~.\label{basic_form}\eea
It is easy to check that these forms inherit a differential $d^B$ from the De Rham differential. The basic cohomology group $H_B^{\sbullet}(M)$ is defined as the $d^B$-cohomology.
On a K-contact manifold, one has also a paring of differential forms using $\star$. And the associated adjoint operator to $d^B$ is
\bea
(d^B)^{\dag}\ga=(d^{\dag}\ga)^H,\label{d_B_adjoint}\eea
where we remind the reader that $H$ means projection to the horizontal component. The transverse Hodge theory then says that every element of the basic cohomology has a harmonic representative $d^B\ga=(d^B)^{\dag}\ga=0$. Furthermore there are some useful facts about basic cohomology (see section 7 of the book \cite{BoyerGalicki} and references therein):
they are of finite dimension; $H^0_B=H^0$; $H^1_B=H^1$ and if $H^1=0$, one has the exact sequence involving $H_B^2$
\bea
0\to \BB{R}\stackrel{d\gk}{\to} H_B^2 \to H^2\to 0~,\label{exact_sequence}
\eea
where the first map is the multiplication by $d\gk$.

Finally we give the convention for the gamma matrices. The Clifford algebra reads as $\{\Gc_p,\Gc_q\}=2g_{pq}$ and
\bea
\Gc_{i_1\cdots i_k}=\frac{1}{k!}\Gc_{[i_1}\cdots\Gc_{i_k]}~.\nn
\eea
The indices on the gamma matrices will also be raised or lowered with the metric. For a form $A_{i_1\cdots i_k}$, we will use the following three  notations interchangeably
\bea
 \slashed{A}=A\cdotp\Gc=A_{i_1\cdots i_k}\Gc^{i_1\cdots i_k}~.\nn\eea
For further details of the gamma matrix algebra, we refer the reader to \cite{Kallen:2012va} and \cite{Qiu:2013pta}.

\providecommand{\href}[2]{#2}\begingroup\raggedright\endgroup


\begin{thebibliography}{10}

\bibitem{Pestun:2007rz}
V.~Pestun, ``{Localization of gauge theory on a four-sphere and supersymmetric
  Wilson loops},'' \href{http://dx.doi.org/10.1007/s00220-012-1485-0}{{\em
  Commun.Math.Phys.} {\bfseries 313} (2012) 71--129},
\href{http://arxiv.org/abs/0712.2824}{{\ttfamily arXiv:0712.2824 [hep-th]}}.

\bibitem{Kallen:2012cs}
J.~K\"all\'en and M.~Zabzine, ``{Twisted supersymmetric 5D Yang-Mills theory and
  contact geometry},'' \href{http://dx.doi.org/10.1007/JHEP05(2012)125}{{\em
  JHEP} {\bfseries 1205} (2012) 125},
\href{http://arxiv.org/abs/1202.1956}{{\ttfamily arXiv:1202.1956 [hep-th]}}.

\bibitem{Nekrasov:1996cz}
N.~Nekrasov, ``{Five dimensional gauge theories and relativistic integrable
  systems},'' \href{http://dx.doi.org/10.1016/S0550-3213(98)00436-2}{{\em
  Nucl.Phys.} {\bfseries B531} (1998) 323--344},
\href{http://arxiv.org/abs/hep-th/9609219}{{\ttfamily arXiv:hep-th/9609219
  [hep-th]}}.

\bibitem{Baulieu:1997nj}
L.~Baulieu, A.~Losev, and N.~Nekrasov, ``{Chern-Simons and twisted
  supersymmetry in various dimensions},''
  \href{http://dx.doi.org/10.1016/S0550-3213(98)00096-0}{{\em Nucl.Phys.}
  {\bfseries B522} (1998) 82--104},
\href{http://arxiv.org/abs/hep-th/9707174}{{\ttfamily arXiv:hep-th/9707174
  [hep-th]}}.

\bibitem{Harland:2011zs}
D.~Harland and C.~Nolle, ``{Instantons and Killing spinors},''
  \href{http://dx.doi.org/10.1007/JHEP03(2012)082}{{\em JHEP} {\bfseries 1203}
  (2012) 082},
\href{http://arxiv.org/abs/1109.3552}{{\ttfamily arXiv:1109.3552 [hep-th]}}.

\bibitem{Corrigan:1982th}
  E.~Corrigan, C.~Devchand, D.~B.~Fairlie and J.~Nuyts,
  ``{First Order Equations for Gauge Fields in Spaces of Dimension Greater Than Four},''
  {\em Nucl.Phys.} {\bf B214} (1983) 452.
  
  \bibitem{Fan}
  H.~Fan, 
  ``{Half de Rham complexes and line fields on odd-dimensional manifolds},''
   {\em Trans. Amer. Math. Soc.} 
   {\bfseries 348} (1996), 2947-2982. 
  

\bibitem{Kallen:2012va}
J.~K\"all\'en, J.~Qiu, and M.~Zabzine, ``{The perturbative partition function of
  supersymmetric 5D Yang-Mills theory with matter on the five-sphere},''
  \href{http://dx.doi.org/10.1007/JHEP08(2012)157}{{\em JHEP} {\bfseries 1208}
  (2012) 157},
\href{http://arxiv.org/abs/1206.6008}{{\ttfamily arXiv:1206.6008 [hep-th]}}.

\bibitem{Qiu:2013pta}
J.~Qiu and M.~Zabzine, ``{5D Super Yang-Mills on $Y^{p,q}$ Sasaki-Einstein
  manifolds},''
\href{http://arxiv.org/abs/1307.3149}{{\ttfamily arXiv:1307.3149}}.

\bibitem{Qiu:2013aga}
J.~Qiu and M.~Zabzine, ``{Factorization of 5D super Yang-Mills on $Y^{p,q}$
  spaces},'' \href{http://dx.doi.org/10.1103/PhysRevD.89.065040}{{\em
  Phys.Rev.} {\bfseries D89} (2014) 065040},
\href{http://arxiv.org/abs/1312.3475}{{\ttfamily arXiv:1312.3475 [hep-th]}}.

\bibitem{Qiu:2014oqa}
J.~Qiu, L.~Tizzano, J.~Winding, and M.~Zabzine, ``{Gluing Nekrasov partition
  functions},''
\href{http://arxiv.org/abs/1403.2945}{{\ttfamily arXiv:1403.2945 [hep-th]}}.

\bibitem{Kim:2012ava}
H.-C. Kim and S.~Kim, ``{M5-branes from gauge theories on the 5-sphere},''
  \href{http://dx.doi.org/10.1007/JHEP05(2013)144}{{\em JHEP} {\bfseries 1305}
  (2013) 144},
\href{http://arxiv.org/abs/1206.6339}{{\ttfamily arXiv:1206.6339 [hep-th]}}.

\bibitem{Lockhart:2012vp}
G.~Lockhart and C.~Vafa, ``{Superconformal Partition Functions and
  Non-perturbative Topological Strings},''
\href{http://arxiv.org/abs/1210.5909}{{\ttfamily arXiv:1210.5909 [hep-th]}}.

\bibitem{Kim:2012qf}
H.-C. Kim, J.~Kim, and S.~Kim, ``{Instantons on the 5-sphere and M5-branes},''
\href{http://arxiv.org/abs/1211.0144}{{\ttfamily arXiv:1211.0144 [hep-th]}}.

\bibitem{Wolf:2012gz}
M.~Wolf, ``{Contact Manifolds, Contact Instantons, and Twistor Geometry},''
  \href{http://dx.doi.org/10.1007/JHEP07(2012)074}{{\em JHEP} {\bfseries 1207}
  (2012) 074},
\href{http://arxiv.org/abs/1203.3423}{{\ttfamily arXiv:1203.3423 [hep-th]}}.

\bibitem{Baraglia:2014gma}
D.~Baraglia and P.~Hekmati, ``{Moduli Spaces of Contact Instantons},''
\href{http://arxiv.org/abs/1401.5140}{{\ttfamily arXiv:1401.5140 [math.DG]}}.

\bibitem{Pan:2014nha}
Y.~Pan, ``{Note on a Cohomological Theory of Contact-Instanton and Invariants
  of Contact Structures},''
\href{http://arxiv.org/abs/1401.5733}{{\ttfamily arXiv:1401.5733 [hep-th]}}.

\bibitem{Pan:2014bwa}
Y.~Pan, ``{5d Higgs Branch Localization, Seiberg-Witten Equations and Contact
  Geometry},''
\href{http://arxiv.org/abs/1406.5236}{{\ttfamily arXiv:1406.5236 [hep-th]}}.

\bibitem{Taubes_SWW}
C.~H. {Taubes}, ``{The Seiberg-Witten equations and the Weinstein
  conjecture},'' {\em ArXiv Mathematics e-prints} (Oct., 2006) ,
  \href{http://arxiv.org/abs/math/0611007}{{\ttfamily math/0611007}}.

\bibitem{Anderson:2012ck}
L.~Anderson, ``{Five-dimensional topologically twisted maximally supersymmetric
  Yang-Mills theory},'' \href{http://dx.doi.org/10.1007/JHEP02(2013)131}{{\em
  JHEP} {\bfseries 1302} (2013) 131},
\href{http://arxiv.org/abs/1212.5019}{{\ttfamily arXiv:1212.5019 [hep-th]}}.

\bibitem{Witten_SW}
E.~Witten, ``Monopoles and four-manifolds,''
  \href{http://dx.doi.org/10.4310/MRL.1994.v1.n6.a13}{{\em Math. Res. Lett.}
  {\bfseries 1} no.~6, (1994) 769--796}.
  \url{http://dx.doi.org/10.4310/MRL.1994.v1.n6.a13}.

\bibitem{Taubes1994}
C.~H. {Taubes} {\em Mathematical Research Letters} no.~1, (1994) 809–822.

\bibitem{Witten_FK}
E.~Witten, ``{Fivebranes and Knots},''
\href{http://arxiv.org/abs/1101.3216}{{\ttfamily arXiv:1101.3216 [hep-th]}}.

\bibitem{Haydys}
A.~{Haydys}, ``{Fukaya-Seidel category and gauge theory},'' {\em ArXiv
  e-prints} (Oct., 2010) , \href{http://arxiv.org/abs/1010.2353}{{\ttfamily
  arXiv:1010.2353 [math.SG]}}.

\bibitem{Cherkis:2014xua}
S.~A. Cherkis, ``{Octonions, Monopoles, and Knots},''
\href{http://arxiv.org/abs/1403.6836}{{\ttfamily arXiv:1403.6836 [hep-th]}}.

\bibitem{Hosomichi:2012ek}
K.~Hosomichi, R.-K. Seong, and S.~Terashima, ``{Supersymmetric Gauge Theories
  on the Five-Sphere},''
  \href{http://dx.doi.org/10.1016/j.nuclphysb.2012.08.007}{{\em Nucl.Phys.}
  {\bfseries B865} (2012) 376--396},
\href{http://arxiv.org/abs/1203.0371}{{\ttfamily arXiv:1203.0371 [hep-th]}}.

\bibitem{Pan:2013uoa}
Y.~Pan, ``{Rigid Supersymmetry on 5-dimensional Riemannian Manifolds and
  Contact Geometry},''
\href{http://arxiv.org/abs/1308.1567}{{\ttfamily arXiv:1308.1567 [hep-th]}}.

\bibitem{Vafa:1994tf}
C.~Vafa and E.~Witten, ``{A Strong coupling test of S duality},''
  \href{http://dx.doi.org/10.1016/0550-3213(94)90097-3}{{\em Nucl.Phys.}
  {\bfseries B431} (1994) 3--77},
\href{http://arxiv.org/abs/hep-th/9408074}{{\ttfamily arXiv:hep-th/9408074
  [hep-th]}}.

\bibitem{Martelli:2005tp}
D.~Martelli, J.~Sparks, and S.-T. Yau, ``{The Geometric dual of a-maximisation
  for Toric Sasaki-Einstein manifolds},''
  \href{http://dx.doi.org/10.1007/s00220-006-0087-0}{{\em Commun.Math.Phys.}
  {\bfseries 268} (2006) 39--65},
\href{http://arxiv.org/abs/hep-th/0503183}{{\ttfamily arXiv:hep-th/0503183
  [hep-th]}}.

\bibitem{Schmude:2014lfa}
J.~Schmude, ``{Localisation on Sasaki-Einstein manifolds from holomophic
  functions on the cone},''
\href{http://arxiv.org/abs/1401.3266}{{\ttfamily arXiv:1401.3266 [hep-th]}}.

\bibitem{Berkovits:1993hx}
N.~Berkovits, ``{A Ten-dimensional superYang-Mills action with off-shell
  supersymmetry},'' \href{http://dx.doi.org/10.1016/0370-2693(93)91791-K}{{\em
  Phys.Lett.} {\bfseries B318} (1993) 104--106},
\href{http://arxiv.org/abs/hep-th/9308128}{{\ttfamily arXiv:hep-th/9308128
  [hep-th]}}.

\bibitem{Alvarez-Gaume_83_JOP}
L.~Alvarez-Gaume, ``A note on the atiyah-singer index theorem,'' {\em Journal
  of Physics A: Mathematical and General} {\bfseries 16} no.~18, (1983) 4177.
  \url{http://stacks.iop.org/0305-4470/16/i=18/a=018}.

\bibitem{alvarez-gaume1983}
L.~Alvarez-Gaume, ``Supersymmetry and the atiyah-singer index theorem,'' {\em
  Communications in Mathematical Physics} {\bfseries 90} no.~2, (1983)
  161--173. \url{http://projecteuclid.org/euclid.cmp/1103940278}.

\bibitem{MR2682326}
D.~E. Blair, \href{http://dx.doi.org/10.1007/978-0-8176-4959-3}{{\em Riemannian
  geometry of contact and symplectic manifolds}}, vol.~203 of {\em Progress in
  Mathematics}.
\newblock Birkh\"auser Boston, Inc., Boston, MA, second~ed., 2010.
\newblock \url{http://dx.doi.org/10.1007/978-0-8176-4959-3}.

\bibitem{BoyerGalicki}
C.~P. Boyer and K.~Galicki, {\em Sasakian geometry}.
\newblock Oxford Mathematical Monographs. Oxford University Press, Oxford,
  2008.

\end{thebibliography}
\end{document}